\documentclass[pdflatex,sn-mathphys-num]{sn-jnl}

\usepackage{graphicx}%
\usepackage{multirow}%
\usepackage{amsmath,amssymb,amsfonts}%
\usepackage{amsthm}%
\usepackage{mathrsfs}%
\usepackage[title]{appendix}%
\usepackage[dvipsnames]{xcolor}%
\usepackage{textcomp}%
\usepackage{manyfoot}%
\usepackage{booktabs}%
\usepackage{algorithm}%
\usepackage{algorithmicx}%
\usepackage{algpseudocodex}%
\usepackage{listings}%

\usepackage[nice]{nicefrac}
\usepackage{standalone} 
\usepackage{enumerate}
\usepackage{dsfont}
\usepackage{tikz}
\usepackage{fge}
\usepackage[switch]{lineno}
\usepackage{pifont}
\usepackage{natbib}
\usepackage{hyperref}
\usepackage{caption}
\usepackage{subcaption}
\usepackage{lipsum}

\definecolor{TODO-blue}{HTML}{0740f8}
\definecolor{Rescuer-Orange}{HTML}{fb5604}

\newcommand{\NP}{\textbf{NP}}
\newcommand{\RP}{\textbf{RP}}
\newcommand{\cmark}{\ding{51}\;}
\newcommand{\xmark}{\ding{55}\;}

\newcommand{\MM}{\mathbf{M}}
\newcommand{\PP}{\mathbb{P}}

\renewcommand{\ss}{\mathbf{s}}
\newcommand{\rr}{\mathbf{r}}

\newcommand{\cC}{\mathcal{C}}
\newcommand{\dD}{\mathcal{D}}

\newcommand{\hH}{\mathcal{H}}
\newcommand{\lL}{\mathcal{L}}

\newcommand{\oO}{\mathcal{O}}

\newcommand{\argmax}{\mathop{\mathrm{argmax}}}

\newcommand{\ceil}[1]{\lceil#1\rceil}
\newcommand{\Ind}[1]{\mathds{1}(#1)} 
\newcommand{\mysetminus}{\raisebox{0.05em}{\scriptsize $\hspace{0.02cm} \fgebackslash \hspace{0.02cm}$}} 
\newcommand{\AW}{\alpha(W)} 

\newtheorem{cor}{Corollary} 

\newcommand{\majority}[1]{\tikz[baseline]{\draw[fill=#1,line width=0.5pt] (0,0.1) circle(0.9ex)}}

\newcommand{\unfeasible}{\tikz[baseline]{\draw[color=BrickRed, line width=0.8pt, fill=white] (0,-0.04)--(0,0.26)--(0.3,0.26)--(0.3,-0.04)--(0,-0.04);}\,}
\newcommand{\rescue}{\tikz[baseline]{\draw[color=Rescuer-Orange, line width=0.5pt, fill=Rescuer-Orange!45] (0,-0.04)--(0,0.26)--(0.3,0.26)--(0.3,-0.04)--(0,-0.04);}\,}
\newcommand{\rescuer}{\tikz[baseline]{\draw[color=Rescuer-Orange, line width=0.5pt, fill=Rescuer-Orange] (0,0.105) circle(1.1ex);
\draw [color=white,draw opacity=1 ][line width=3]    (-0.15,0.25) -- (0.15,-0.05) ;
\draw [color=white,draw opacity=1 ][line width=3]    (-0.15,-0.05) -- (0.15,0.25) ;
\draw[color=Rescuer-Orange, line width=0.7pt, fill=white] (0,0.105) circle(0.5ex);
\draw[color=Rescuer-Orange, line width=0.6pt] (0,0.105) circle(1.1ex);}\;}

\theoremstyle{thmstyleone}%
\newtheorem{theorem}{Theorem}
\newtheorem{proposition}[theorem]{Proposition}%

\theoremstyle{thmstyletwo}%
\newtheorem{example}{Example}%
\newtheorem{remark}{Remark}%

\theoremstyle{thmstylethree}%
\newtheorem{definition}{Definition}%

\begin{document}

\title[On the Limits of PAC Learning of Networks from Opinion Dynamics]{On the Limits of PAC Learning of Networks from Opinion Dynamics}


\author[1,2]{\fnm{Dmitry} \sur{Chistikov}}\email{d.chistikov@warwick.ac.uk}
\equalcont{These authors contributed equally to this work.}

\author*[3]{\fnm{Luisa} \sur{Estrada}}\email{luisa-fernanda.estrada-plata@warwick.ac.uk}

\author[1,2]{\fnm{Mike} \sur{Paterson}}\email{m.s.paterson@warwick.ac.uk}
\equalcont{These authors contributed equally to this work.}

\author[2]{\fnm{Paolo} \sur{Turrini}}\email{p.turrini@warwick.ac.uk}
\equalcont{These authors contributed equally to this work.}

\affil[1]{\orgdiv{Centre for Discrete Mathematics and its Applications}, \orgname{University of Warwick}, \orgaddress{\country{United Kingdom}}}

\affil[2]{\orgdiv{Department of Computer Science}, \orgname{University of Warwick}, \orgaddress{\country{United Kingdom}}}

\affil[3]{\orgdiv{Department of Mathematics}, \orgname{University of Warwick}, \orgaddress{\country{United Kingdom}}}


\abstract{Agents in social networks with threshold-based dynamics change opinions when influenced by sufficiently many peers. Existing literature typically assumes that the network structure and dynamics are fully known, which is often unrealistic. In this work, we ask how to learn a network structure from samples of the agents' synchronous opinion updates. Firstly, if the opinion dynamics follow a threshold rule in which a fixed number of influencers prevent opinion change (e.g., \emph{unanimity} and \emph{quasi-unanimity}), we provide an efficient PAC learning algorithm provided that the number of influencers per agent is bounded. Secondly, under standard computational complexity assumptions, we prove that if agents' opinions follow the \emph{majority} of their influencers, then there is no efficient PAC learning algorithm. We propose a polynomial-time heuristic that successfully learns consistent networks in over $98\%$ of our simulations on random graphs, with no failures for some specified conditions on the numbers of agents and opinion diffusion examples.}

\keywords{Social Networks, PAC learning, Graph Inference, Opinion Diffusion, Threshold Dynamics}



\maketitle

\section{Introduction}

Opinions are ubiquitous, and learning how they emerge and evolve is crucial for understanding the spread of ideas, beliefs and behaviours within societies. From the rise of social movements to the diffusion of technological innovations, the mechanisms driving opinion change have profound implications in economics \cite{jackson_evolution_2002}, sociology \cite{kandler_modeling_2017}, and political science \cite{castiglioni_election_2019}. Opinions are also contagious, making them practical tools for developing and analysing models in epidemiology \cite{leskovec_cost-effective_2007} and marketing \cite{deligkas_being_2023}. They are shaped and reshaped through interactions among agents in social networks, contributing to processes of cultural transmission \cite{fogarty_role_2013, galesic_beyond_2023} and social learning \cite{ammar_social_2023, gavrilets_solution_2014}.
    
From a computational perspective, opinion dynamics represent information propagating through a directed graph. This offers a framework for studying how social influence spreads in a network \cite{chistikov_convergence_2020,mukhopadhyay_voter_2020,auletta_minority_2015} and how it can be exploited for campaigning purposes \cite{kempe_maximizing_2003,bredereck_manipulating_2017,kushwaha_capricious_2022}.

Nevertheless, opinion dynamics studies frequently assume a known network structure. This is often unrealistic \cite{netrapalli_learning_2012}, especially when the structure is hidden for data protection. When this is the case, connections among agents remain unidentifiable as multiple networks can have identical dynamics over certain opinion inputs. Some existing studies may not assume the structure to be fully available but rely on the network being derived from a Stochastic Block Model \cite{wilder_maximizing_2018}. Others use Bayesian optimisation techniques to predict the opinions' transition matrix by empirically estimating the cross-correlation matrix \cite{ravazzi_learning_2021}. A polynomial-time algorithm was recently given for exact learning social networks endowed with majority dynamics, where opinions can be observed and manipulated \cite{chistikov_learning_2024}. While this last framework is close in spirit to ours, it requires learning a social network with certainty and relies on direct intervention on the agents' opinions. 

\subsection{Our Contribution.}

We study the problem of probably approximately correct (PAC) learning the structure of a social network from its threshold-based opinion dynamics. We ask whether there are algorithms that can learn a network with error at most $\varepsilon$ and confidence at least $1-\delta$. When opinion dynamics follow the \emph{all-but-$\kappa$} threshold dynamics, i.e., if agreement from a fixed number $\kappa$ of influencers inhibits opinion change, we give an efficient learning algorithm by constructing a Consistent Hypothesis Finder (Proposition \ref{prop: CHF unanimity} and Theorem \ref{theo: all but k solution}).
When the agents' opinions follow the \emph{majority} of their influencers, we prove under standard computational complexity assumptions ($\NP \ne \RP$)
that no polynomial-time algorithm can PAC learn the underlying network (Theorem \ref{theo: Hitting set reduction}). To complement this hardness result, we design a heuristic (Algorithm \ref{alg: Waterfall algorithm}) and evaluate its performance on synthetic random graphs. 

\subsection{Related Literature.}

Our approach connects two prominent research lines: opinion dynamics \cite{auletta_complexity_2020,bredereck_manipulating_2017,kempe_maximizing_2003,chistikov_convergence_2020} and network inference \cite{netrapalli_learning_2012,myers_convexity_2010,gomez-rodriguez_uncovering_2011}. For a more comprehensive survey, see \cite{grabisch_survey_2020}. 

We narrow our scope to models with binary opinions governed by threshold-based diffusion rules, as these embody fundamental human decision-making factors such as homophily and biased assimilation \cite{ravazzi_learning_2021}. In these models, agents form their opinions after tallying those of others in a weighted sum. They may have a preferred opinion \cite{mukhopadhyay_voter_2020} or attempt to conform with nearby agents, leading to Ising models inspired by the alignment of atom spins \cite{baldassarri_ising_2023}. Agents might adopt the most popular opinion from a random sample of neighbours, as in Voting Models \cite{liggett_interacting_2004}, or external forces can be used to add a manipulation layer \cite{bredereck_manipulating_2017}. A recent contribution \cite{chistikov_learning_2024} explored the exact learning of social networks through direct intervention on opinions, but relied on having polynomially many interventions available. We remove these requirements by studying random samples of individual opinion diffusion steps and relaxing the exact learning condition to an approximately correct one.

Possibly the most prominent example of threshold-based dynamics is the Linear Threshold model with active or inactive agents \cite{kempe_influential_2005}. In it, agents need to attain a fixed number of active neighbours to activate, and cannot be deactivated. This model has been pivotal in the area of influence maximisation on networks and has inspired several efficient algorithms, such as DD \cite{chen_efficient_2009}, CELF \cite{leskovec_cost-effective_2007}, TIM, and TIM+ \cite{tang_influence_2015}. 

We are interested in the boundary between tractable and intractable problems for these models. This was previously explored in \cite{auletta_complexity_2020,deligkas_being_2023} in the context of picking an optimal set of agents to spread an opinion. The authors showed that the problem's $\NP$-hardness depends on social network features, such as the presence of tree-like structures or cycles with sufficiently small periods. Similarly, 
\cite{narasimhan_learnability_2015} showed that PAC learning influence functions can be achieved with polynomial sample complexity when the activation times of the nodes are known; however, without this knowledge, the problem becomes computationally hard. In contrast, \cite{gomez-rodriguez_inferring_2010} showed, in an epidemiological context, that identifying the most likely infection structure is $\NP$-hard when contagion times are known but the diffusion rule is not. Authors in \cite{qiu_learning_2024} extended this line of work by studying the simultaneous PAC learning of both network topology and interaction functions, showing that the problem is generally computationally intractable; yet, the special case of threshold dynamics and graphs containing a perfect matching is an efficiently learnable class. We aim to refine the boundary of PAC learnability for inferring connections between agents. We explore the conditions under which the problem is hard, trading off additional information from the interaction function against extending the class of learnable graphs.

\subsection{Paper Structure}

Section \ref{sec: math preliminaries} introduces the notation and preliminary concepts used throughout the paper. It defines the \emph{all-but-$\kappa$} and \emph{$\tau$-margin} opinion diffusion protocols, from which unanimity and majority dynamics arise as special cases. The section reviews the PAC learning framework and formalises the Social Network inference problem as a PAC learning task. It also presents the \emph{matching transformation}, our main tool for handling binary opinions in later proofs and algorithms. Section \ref{sec: Unanimity and All-but-k} presents our theoretical results for \emph{all-but-$\kappa$} dynamics, while Section \ref{sec: hitting set problem} establishes a hardness result for social networks equipped with majority dynamics. In Section \ref{sec: Waterfall algorithm}, we propose a heuristic, the \texttt{Waterfall} algorithm, to tackle majority dynamics and analyse its theoretical guarantees and empirical performance. We discuss our results and future directions in Section \ref{sec: discussion}. 

For readability, full proofs are deferred to Appendix \ref{sec: full proofs}. An alternative characterisation of feasible solutions under majority dynamics is presented in Appendix \ref{sec: deep Waterfall}, together with a tie-breaking subroutine for the \texttt{Waterfall} algorithm and illustrative examples. Technical specifications of the parameters used in our random graph generation models, as well as additional experiments across different sparsity regimes, are presented in Appendix \ref{sec: param specs}.

\section{Mathematical Preliminaries}
\label{sec: math preliminaries}
\subsection{Social Networks.} 
When agents update their opinions synchronously, the network can be decoupled. Hence, inferring a network can be parallelised into finding each agent's influencers. 
Throughout this paper, let $N=[n]$ be a finite set of agents and add agent $i$ as a distinguished target node. Let $G$ be a directed graph over $N\cup \{i\}$ whose edges represent influence relations. 
An edge $j\to i$ in $G$ indicates that agent $j\in N$ \emph{influences} agent $i$. We refer to a function $\ell: N \to L$ that assigns labels from a set $L$ to the agents in $N$ as a \emph{network labelling} (or simply a \emph{labelling}). 
We consider the binary case with $L:=\{\phi,\neg \phi\}$ and, slightly abusing notation, write $\ell(j)=\phi$ when opinions of agent $j\in N$ and agent $i$ agree, and $\ell(j)=\neg \phi$ when they disagree.

In short, a \emph{social network endowed with opinion dynamics} is a directed graph of agents and rules of how they update opinions. We denote them with the triple $(N\cup\{i\}, G, f)$, where $f$ is a diffusion protocol with respect to agent $i$. The diffusion protocols assert whether agent $i$ changes opinion after interacting with a labelling $\ell\in L^N$. We define protocols with boolean sentences that return $\mathit{True}$ if agent $i$ changes opinion and $\mathit{False}$ if not.

We focus on two synchronous threshold-based protocols for binary opinions: all-but-$\kappa$, when an agent adopts an opinion unless $\kappa$ or more influencers support the current one; and $\tau$-margin, when an agent adopts an opinion if disagreeing influencers exceed those agreeing by more than $\tau$. These are defined below for an abstract set of agents $F\subseteq N$. Later on, we intend to retrieve the true \emph{influencer set} $G_i$ for each agent $i$ from observing their threshold-based dynamics. 

\begin{enumerate}
    \item \textbf{All-but-$\kappa$:} Let $\kappa \geq 0$. An agent changes opinion 
    unless $\kappa$ or more of its influencers agree with it. For any network labelling $\ell\in L^N$ and $F\subseteq N$, the all-but-$\kappa$ protocol $f^{\leq \kappa}$ is given by the boolean sentence
    \begin{align*}
        f^{\leq \kappa}(\ell, F):= \sum_{j\in F} \Ind{\ell(j)=\phi} \leq \kappa.
    \end{align*}
    \emph{Unanimity dynamics} arise for $\kappa=0$, where an agent changes opinion only if all its influencers disagree with~it.
    
    \item \textbf{$\tau$-margin:} An agent changes opinion if the number of influencers who disagree with it exceeds those who agree by a margin greater than $\tau\geq 0$. For any network labelling $\ell\in L^N$ and $F\subseteq N$, the $\tau$-margin protocol $f^{+\tau}$ is given by the boolean sentence
    \begin{align*}
        f^{+\tau}(\ell, F):= \sum_{j\in F} \Ind{\ell(j)\neq \phi} - \Ind{\ell(j)=\phi} > \tau.
    \end{align*} 
    Choosing $\tau=0$ yields \emph{majority dynamics}, where an agent changes opinion when a strict majority of its influencers disagree with it. We use $f^+$ to denote this.
\end{enumerate}

Given a sample of network labellings, the main challenge is to deal with agents who disagreed (or agreed) with our target agent in some labellings where its opinion changed and in others where it did not. This begs the question of whether or not they caused the change in opinion. To address this, we divide the sample into two subsets, called the \emph{always-changing} and the \emph{never-changing} labellings. We aim to show how they can co-exist without contradicting each other.

\begin{definition}
    \label{def: always-never changing}
    Let $f$ be any diffusion protocol. We say a set of network labellings $\{\ell_k\}_{k=1}^m$ is \emph{always-changing} for $F\subseteq N$ if the sentence $f(\ell_k,F)$ is $\mathit{True}$ for $k=1, \dots, m$. It is \emph{never-changing} if all $m$ sentences are $\mathit{False}$.
\end{definition}

\subsection{PAC Learning.}

We formalise the \emph{Probably Approximately Correct} (PAC) learning following \cite[Ch. 1.2]{kearns_introduction_1994}. The goal is to design a learning model whose predictions are both highly probable and close to the true answer. Let $X$ be the \emph{domain} of elements encoding the learner's world. A \emph{concept} $c$ over $X$ is a boolean mapping $c: X \to \{0,1\}$ indicating whether $x\in X$ has a desired property. A \emph{concept class} $\cC$ is a collection of concepts within a \emph{hypothesis space} $\hH$. A \emph{learner} $\lL$ is an algorithm whose objective is to distinguish between positive and negative examples of a target concept $c$ chosen arbitrarily from $\cC$. Examples are given by an \emph{oracle} $\text{Ex}(c,\dD)$, assumed to draw an example $x$ from $X$ with distribution $\dD$ and return $c(x)$ in constant time. Note that $\lL$ cannot query the oracle directly.

\begin{definition}
    A concept class $\cC$ over a domain $X$ is \emph{$(\varepsilon, \delta)$-PAC learnable} using $\hH \supseteq \cC$ if a learner $\lL$ exists such that, given any inputs $\varepsilon, \delta \in (0, \frac{1}{2})$ and access to $\emph{Ex}(c,\dD)$, $\lL$ outputs a hypothesis concept $h\in \hH$ satisfying with probability $1-\delta$ that $\PP[h(x)\neq c(x)]\leq \varepsilon$, for every concept $c\in \cC$ and $x\in X$ randomly sampled with distribution $\dD$. The error probability $\PP$ is taken over the random examples drawn from $\emph{Ex}(c,\dD)$ and any internal randomisation of $\lL$. If these conditions are met, we also say $\lL$ \emph{$(\varepsilon,\delta)$-PAC learns} $\cC$.
\end{definition}

\begin{definition}
    A \emph{Consistent Hypothesis Finder} (CHF) for a concept class $\cC$ over $X$ using $\hH \supseteq \cC$ is an algorithm $\lL$ such that, for all $m>0$ and $c\in \cC$, if $\lL$ is given a sample $\{(x_1, c(x_1)), \dots, (x_m, c(x_m))\}$, 
    then it outputs a hypothesis $h\in \hH$ satisfying $h(x_k)=c(x_k)$ for $k=1, \dots, m$.
\end{definition}

The learner gains more information as the oracle $\text{Ex}(c, \dD)$ draws more examples from $X$ since fewer hypotheses in $\hH$ remain indistinguishable from the target concept $c\in \cC$. Yet, the learner cannot refine its guess to the correct concept if multiple hypotheses are consistent with the training examples. This is resolved by the Fundamental Theorem of PAC Learning (see, e.g., \cite[Cor. 3]{hanneke_optimal_2016}), which links the sample complexity of a concept class with its VC dimension. 

\begin{definition}
    For a concept class $\cC$ over $X$ and a finite set of points $S=\{x_1, \dots, x_m\}$, if $\{(c(x_1), \dots, c(x_m)): c\in \cC\}=\{0,1\}^m$, then we say that $S$ is \emph{shattered} by $\cC$. Furthermore, the cardinality of the largest set shattered by $\cC$ is its \emph{Vapnik-Chervonenkis} dimension $\text{VCD}(\cC)$.
    \label{def: VC dim}
\end{definition}

\begin{theorem}[Fundamental Theorem of PAC Learning]
    Let $\cC$ be an arbitrary concept class in $X$ and $\dD$ any distribution over $X$. Then, any algorithm that $(\varepsilon, \delta)$-PAC learns $\cC$ requires at least 
    \begin{align*}
      m=\Theta\left(\frac{1}{\varepsilon} \left( \text{VCD}(\cC) + \log \frac{1}{\delta} \right)\right)  
    \end{align*}
    many examples from $\text{Ex}(c, \dD)$, for any $c\in \cC$.
    \label{theo: tight bounds on PAC learning}
\end{theorem}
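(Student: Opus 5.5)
The statement is the Fundamental Theorem of PAC Learning. It has two halves, and I would prove them separately. The lower bound holds for every learner. The upper bound shows that $m$ of this order suffices, namely for any Consistent Hypothesis Finder, up to a logarithmic factor, with the tight $\Theta$ given by the optimal learner of Hanneke. Throughout, write $d=\text{VCD}(\cC)$.

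\textbf{Lower bound.} I would prove the two terms separately and then add them.

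For the $d/\varepsilon$ term, fix a set $S=\{x_0,\dots,x_{d-1}\}$ shattered by $\cC$. Take $\dD$ to put mass $1-8\varepsilon$ on $x_0$ and mass $8\varepsilon/(d-1)$ on each remaining point. Choose the target $c$ uniformly among the $2^{d-1}$ concepts realising all labellings of $x_1,\dots,x_{d-1}$ (shattering guarantees they exist).
\begin{itemize}
\item If $m\le (d-1)/(32\varepsilon)$, then in expectation at least half of $x_1,\dots,x_{d-1}$ are never sampled.
\item On each unsampled point the learner's prediction is independent of the uniformly random label, so it errs with probability $1/2$.
\item Hence the expected error is at least about $2\varepsilon$. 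A Markov-type argument then gives error $>\varepsilon$ with probability $>\delta$ for small $\delta$.
\item Averaging over targets, this yields a fixed bad $c$.
\end{itemize}

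For the $\log(1/\delta)/\varepsilon$ term, use two concepts that differ on a single point $x$, and put mass $\varepsilon$ on $x$. With probability $(1-\varepsilon)^m\ge e^{-2\varepsilon m}$ the point $x$ is never drawn. In that event the learner cannot tell the two concepts apart, so it fails on one of them. This forces $m=\Omega(\log(1/\delta)/\varepsilon)$.

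\textbf{Upper bound.} I would use the standard double-sample (ghost sample) symmetrisation.
\begin{enumerate}
\item Call a hypothesis $h$ bad if $\PP[h\neq c]>\varepsilon$. Bound the probability that some bad $h$ is consistent with a sample $S$ of size $m$ by twice the probability that some $h$ is consistent on $S$ yet errs on at least $\varepsilon m/2$ points of an independent sample $S'$.
\item Condition on the multiset $S\cup S'$ and apply a random swap between the two halves. For each fixed behaviour pattern on the $2m$ points, this probability is at most $2^{-\varepsilon m/2}$.
\item The number of patterns is bounded by Sauer's lemma, $\Pi(2m)\le (2em/d)^d$.
\item Requiring $(2em/d)^d\,2^{-\varepsilon m/2}\le\delta/2$ gives $m=O\bigl(\tfrac1\varepsilon(d\log\tfrac1\varepsilon+\log\tfrac1\delta)\bigr)$ for any CHF.
\end{enumerate}

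\textbf{Main obstacle.} The main obstacle is removing the $\log(1/\varepsilon)$ factor to reach the exact $\Theta$. A CHF cannot do this in general. It needs Hanneke's optimal learner: a majority vote of ERM outputs on a recursively constructed family of overlapping subsamples, together with a careful analysis of its error. I would not reprove that; I would cite \cite[Cor. 3]{hanneke_optimal_2016}, as the paper does. The paper's later use only needs polynomial sample complexity from a CHF, so the $\log$-factor bound above suffices for that purpose.
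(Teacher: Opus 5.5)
Your plan takes essentially the paper's route. The paper gives no argument of its own for this theorem: it cites \cite[Cor.~3]{hanneke_optimal_2016} for the upper bound and \cite{blumer_learnability_1989,ehrenfeucht_general_1989} for the matching lower bounds. You also delegate the one hard ingredient, removing the $\log(1/\varepsilon)$ factor, to Hanneke, and your two lower-bound sketches are the classical arguments of those cited papers. Your remark that a bare CHF is only guaranteed $O\bigl(\frac{1}{\varepsilon}(d\log\frac{1}{\varepsilon}+\log\frac{1}{\delta})\bigr)$ samples is more careful than the paper's Equation~\eqref{eq: how many examples}, which attaches the $\Theta$ bound to an arbitrary CHF. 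As you note, this is harmless there, since only a polynomial sample size is needed.

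Your $\log(1/\delta)/\varepsilon$ argument needs two repairs.
\begin{enumerate}
\item The point $x$ must carry mass strictly larger than $\varepsilon$, say $2\varepsilon$. A hypothesis that errs only on a point of mass exactly $\varepsilon$ still satisfies $\PP[h\neq c]\le\varepsilon$, so your event is not a failure.
\item The remaining mass must sit on a point where the two concepts agree; otherwise every example separates them. So you need two concepts that agree on one point and disagree on another, which holds whenever $|\cC|\ge 3$.
\end{enumerate}

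The second hypothesis cannot be dropped. For $\cC=\{c,1-c\}$, which has VC dimension $1$, a single example identifies the target, so the theorem is false for a literally arbitrary class. The standard form of the result assumes such non-triviality. It also takes $\varepsilon,\delta$ below absolute constants, as your reverse-Markov step implicitly does ($\varepsilon\le 1/8$, $\delta<1/7$).

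Finally, in your CHF upper bound, Sauer's lemma must be applied to the space $\hH$ from which the CHF outputs its hypothesis. So the $d$ there is $\text{VCD}(\hH)$, which equals $\text{VCD}(\cC)$ only for proper CHFs such as the paper's.
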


Theorem \ref{theo: tight bounds on PAC learning} provides an upper bound on the number of examples sufficient to attain a desired PAC learning error $\varepsilon$ and confidence $\delta$. Moreover, \cite{hanneke_optimal_2016} showed this bound is tight, as it matches earlier lower bounds from \cite{blumer_learnability_1989} and \cite{ehrenfeucht_general_1989}. Consequently, for a fixed sample size $m$, we can recover achievable values for $\varepsilon$ and $\delta$ up to multiplicative constants.

\subsection{The Social Network Inference Problem}

We consider a learner who receives a sample of opinion updates for a target node. We assume it knows the diffusion protocol but not the underlying network. The learner's guess for the network improves with more examples, but how many are sufficient for their guess to be good enough? 

Let us formulate the social network inference problem in terms of PAC learning. Define the domain $X$ as the space of network labellings $L^N$ with concept classes induced by a diffusion protocol, $f^{\leq \kappa}$ or $f^{+\tau}$. For a given agent $i$, the concept to learn is the influencer set $ G_i \subseteq N$ that drives its opinion updates. An oracle $\text{Ex}(G_i, \dD)$ draws network labellings $\{\ell_k\}_{k=1}^{m}$ according to a distribution $\dD$ over $L^N$ and returns each as a positive example if agent $i$ changes opinion (i.e., if $f(\ell_k, G_i)$ is $\mathit{True}$, for $f\in \{f^{\leq \kappa},\, f^{+\tau}\}$) or negative otherwise. Notice that both protocols are linear threshold functions over a domain with $n:=\vert N\vert$ dimensions. As such, they are well-studied concept classes with VC dimension $n+1$ (see, e.g., \cite[Ch. 10]{anthony_computational_1992}). Thus, building a CHF in this context boils down to finding a subset $F \subseteq N$ whose oracle prediction matches $G_i$ for the labellings sampled so far. Altogether, we can $(\varepsilon, \delta)-$PAC learn an agent's influencers if the subset $F \subseteq N$ matches the oracle prediction of $G_i$ for a labelling sample of size
\begin{equation}
  m=\Theta\left(\frac{1}{\varepsilon} \left( (n + 1) + \log \frac{1}{\delta} \right)\right).
  \label{eq: how many examples}
\end{equation}

\subsection{Matching Transformations.}

Working with opinions relative to a target agent allows us to efficiently format the inputs for our algorithms via what we call a \emph{matching transformation}. Intuitively, this transformation recovers the agents whose opinions ``match'' the oracle's prediction for agent $i$. If the oracle predicts agent $i$ changes opinion after a labelling $\ell \in L^N$, then the agents who disagreed with it (i.e., $j\in N$ s.t. $\ell(j)\neq \phi$) match this prediction. Conversely, if no change is predicted, then the agents who agreed with agent $i$ in $\ell$ (i.e., $j\in N$ s.t. $\ell(j)=\phi$) match the prediction.

\begin{definition}
    \label{def: matching set}
    Let $(N\cup\{i\}, G, f)$ be a social network with opinion diffusion protocol $f$. For a labelling $\ell\in L^N$, $L=\{\phi, \neg \phi\}$, the \emph{matching set} $M(\ell)\subseteq N$ is the subset of agents such that $j\in M(\ell)$ if $\ell (j) = \neg \phi$ when $f(\ell, G_i)$ is $\mathit{True}$, or $\ell(j) = \phi$ when $f(\ell, G_i)$ is $\mathit{False}$.
\end{definition}

\begin{definition}
    \label{def: matching transformation}
    For $m,n>0$, let $N=[n]$, $(N\cup\{i\}, G, f)$ be a social network, and $(\ell_k)_{k=1}^m$ be a sequence of network labellings. The \emph{matching transformation} for $(\ell_k)_{k=1}^m$ is given by the $m\times n$ matrix $\MM$ with entries 
    \begin{align*}
    \mathbf{M}_{k,j} = \begin{cases}
    \hspace{0.25cm}1 &\text{if} \quad j \in M(\ell_k) \quad \text{and}\\
    -1 & \text{otherwise,}
    \end{cases}
    \end{align*}
    where $M(\ell)\subseteq N$ is the matching set of $\ell\in (\ell_k)_{k=1}^m$.
\end{definition}

\begin{example}
    \label{ex: majority sets}
    Suppose there is a network under some opinion protocol $f$ with $N=[5]$ plus agent $i$. An oracle samples the following labellings $\ell_1, \ell_2 \in \{\phi, \neg \phi\}^N$ relative to agent $i$: 
    \begin{align}
        \begin{array}{c|rrrrr}
        N  & 1 & 2 & 3 & 4 & 5 \\
        \hline
        \ell_1 & \phi & \phi & \neg \phi & \neg \phi & \neg \phi \\
        \ell_2 & \neg \phi & \neg \phi & \phi & \phi & \neg \phi \\
        \end{array}
        \label{eq: ex labellings l1 l2}
    \end{align}
    The agents who disagree with agent $i$ are $\{3,4,5\}$ in $\ell_1$ and $\{1,2,5\}$ in $\ell_2$. The oracle knows which agents are the influencers $G_i\subseteq N$ and predicts agent $i$ changes opinion after $\ell_1$ but not after $\ell_2$. So, $f(\ell_1, G_i)$ is a positive example while $f(\ell_2, G_i)$ is a negative one. This yields the matching sets
    \begin{align*}
        &M(\ell_1) = \{j\in N : \ell(j)=\neg \phi\} \quad \text{and }\\
        &M(\ell_2) = \{j\in N : \ell(j)=\phi\}.
    \end{align*}
    We store these sets in the matching transformation
    \begin{align}
        \mathbf{M}= \begin{bmatrix}
        -1 & -1 & 1 & 1& 1\\
        -1 & -1 & 1 & 1& -1
        \end{bmatrix}.
        \label{eq: majority matrix example}
    \end{align}
    
Graphically, we use \majority{white} when an agent belongs to a matching set and \majority{black} when it does not. $\MM$ is shown in Figure \ref{fig: exm majority transformation}.
\begin{figure}[ht]
    \centering
    \includegraphics[width=0.45\columnwidth]{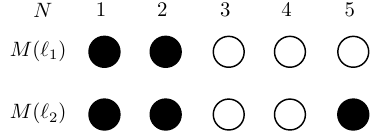}\\
    \caption{Matching transformation for $\MM$ as in \eqref{eq: majority matrix example}. }
    \label{fig: exm majority transformation}
    \end{figure}
\end{example}
The dimensions $m\times n$ of $\MM$ correspond to a labelling sample size $m$, and a social network with $n+1$ agents. Our learner's inputs are $\MM$ and the oracle's predictions $f(\ell_k, G_i)$ for $k=1, \dots, m$. We aim to find a set of agents $F\subseteq N$ whose opinion dynamics mimic the true set of influencers $G_i$ for our labelling sample. We refer to the agents that achieve this as \emph{feasible influencers}. 

\begin{remark}
    We write $\MM_F$ for the columns in $\MM$ restricted to agents in $F\subseteq N$. It is no coincidence that entries in 
    $\MM$ are $\pm 1$. This ensures that $F$ is a feasible influencer set under majority dynamics if and only if the row-sums in $\MM_F$ are nonnegative, and zero only when there is no opinion change.
\end{remark}

\section[Theoretical Results for All-But-k Dynamics]{Theoretical Results for All-But-$\kappa$ Dynamics}
\label{sec: Unanimity and All-but-k} 

Agents in networks ruled by an all-but-$\kappa$ protocol change their opinion unless more than $\kappa$ of their influencers already agree with them. Naturally, an agent changes opinion more often as $\kappa$ increases. We begin with the most restrictive case when $\kappa=0$. This yields \emph{unanimity~dynamics}, where agent $i$ changes opinion only when all its influencers disagree with it. In this case, the set of agents who matched the oracle's prediction across all the labellings where agent $i$ changed opinion will form a feasible influencer set. Thus, returning this set is a CHF with the following runtime upper bound.

\begin{proposition}
    Let $N=[n]$ and $(N\cup\{i\}, G, f^{\leq 0})$ be a social network under an all-but-$\kappa$ protocol, with $\kappa=0$. There exists a CHF that finds a feasible influencer set $F\subseteq N$ in $\oO(m\, n^2)$ time for any labelling sample of size~$m$. 
    \label{prop: CHF unanimity}
\end{proposition}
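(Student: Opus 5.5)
Under unanimity dynamics $f^{\leq 0}(\ell,F)$ is $\mathit{True}$ exactly when every $j\in F$ has $\ell(j)=\neg\phi$. The natural candidate is therefore the largest set compatible with the positive examples. Let $P=\{k: f(\ell_k,G_i)\ \text{is}\ \mathit{True}\}$ and $Q$ its complement in $[m]$, and define
\begin{align*}
F:=\bigcap_{k\in P} M(\ell_k)=\{j\in N: \MM_{k,j}=1 \text{ for all } k\in P\},
\end{align*}
with $F=N$ when $P=\emptyset$. The CHF computes $F$ from the matching transformation $\MM$ and outputs it. It remains to check consistency on both kinds of rows and to bound the running time.

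\emph{Positive examples.} For $k\in P$, every $j\in F$ lies in $M(\ell_k)$, so $\ell_k(j)=\neg\phi$. Hence $\sum_{j\in F}\Ind{\ell_k(j)=\phi}=0\leq 0$, and $f^{\leq 0}(\ell_k,F)$ is $\mathit{True}$. This includes the degenerate case $F=\emptyset$.

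\emph{Negative examples.} This is the step that uses the existence of the true concept $G_i$.
\begin{itemize}
\item For $k\in P$, the positive example gives $\ell_k(j)=\neg\phi$ for all $j\in G_i$, so $G_i\subseteq M(\ell_k)$. Taking the intersection over $P$ yields $G_i\subseteq F$.
\item For $k\in Q$, the example is negative, so some $j\in G_i$ has $\ell_k(j)=\phi$.
\item Since $G_i\subseteq F$, this $j$ also lies in $F$. So $\sum_{j\in F}\Ind{\ell_k(j)=\phi}\geq 1$, and $f^{\leq 0}(\ell_k,F)$ is $\mathit{False}$, as required.
\end{itemize}
The monotonicity being used is that enlarging a feasible set can only make ``no change'' more likely. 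That is why the maximal set consistent with the positives is the right choice.

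\emph{Running time.} Building $\MM$ costs $\oO(mn)$ once the oracle labels are given. Intersecting the positive rows can be done directly in $\oO(mn)$, which is already within the claimed bound. If the CHF additionally verifies consistency on all $m$ rows, recomputing each row-sum over $F$ costs $\oO(n)$ per row. A naive membership test of $F$ as a list adds a factor of $n$, which gives the stated $\oO(m\,n^2)$. So no real obstacle arises beyond this bookkeeping. The only conceptual point is the containment $G_i\subseteq F$.
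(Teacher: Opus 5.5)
Your proposal is correct and follows the paper's proof: output $\bigcap_{k\in P} M(\ell_k)$, note that $G_i$ is contained in it, and use that any superset of $G_i$ keeps the agreeing influencer that makes each negative example $\mathit{False}$. Your explicit checks of the positive rows and of the case $P=\emptyset$, and your observation that the intersection already costs only $\oO(mn)$, are minor refinements of the same argument.
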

Next, we consider the all-but-$\kappa$ model with $\kappa > 0$, when all predictions state agent $i$ changes opinion. Note that selecting a subset with $\kappa+1$ or more agents in $M(\ell)^c$, $\ell \in L^N$, would be inconsistent for any labelling where the target agent changes opinion. So, if the learner checks all subsets in $N$ of size at most $\kappa+1$, it is bound to find at least one feasible influencer set, if such a set exists.

\begin{proposition}
    Let $N=[n]$ and $(N\cup\{i\}, G, f^{\leq \kappa})$ be a social network under an all-but-$\kappa$ protocol, $\kappa\geq 0$. There exists a CHF that finds a feasible influencer set $F\subseteq N$, if one exists, in $\oO(m\, n^{\kappa + 2})$ time for any always-changing labelling sample of size~$m$.
    \label{prop: always changing all but k}
\end{proposition}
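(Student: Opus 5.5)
The plan is to make the remark before the statement precise: in an always-changing sample every row of $\MM$ comes from a labelling with $f^{\leq\kappa}(\ell_k,G_i)$ $\mathit{True}$. The matching set $M(\ell_k)$ is then the set of agents disagreeing with $i$, and its complement $M(\ell_k)^c$ is the set of agents agreeing with $i$. A candidate $F\subseteq N$ is feasible exactly when $|F\cap M(\ell_k)^c|\leq\kappa$ for every $k=1,\dots,m$; equivalently, the row of $\MM_F$ contains at most $\kappa$ entries equal to $-1$. The CHF enumerates candidate subsets, tests this condition row by row, and outputs the first subset that passes.

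The key structural step is a monotonicity (downward-closure) property. If $F$ is feasible and $F'\subseteq F$, then $|F'\cap M(\ell_k)^c|\leq|F\cap M(\ell_k)^c|\leq\kappa$ for all $k$, so $F'$ is feasible too. Hence if any feasible set exists, so does one of size at most $\kappa+1$: take any subset of it of that size, or the set itself if it is smaller. In fact the empty set is trivially feasible, since it has zero agreeing influencers, and so are all sets of size at most $\kappa$. The informative search is therefore over subsets of size exactly $\kappa+1$, and restricting to $|F|\leq\kappa+1$ loses no existence. The algorithm enumerates all subsets of size at most $\kappa+1$ (in decreasing size, say, so that trivial answers are returned only when nothing larger works) and returns any that passes.

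For the running time, there are $\sum_{s\leq\kappa+1}\binom{n}{s}=\oO(n^{\kappa+1})$ candidates, with $\kappa$ treated as a constant. Checking one candidate means scanning $m$ rows and counting the $-1$ entries over at most $\kappa+1$ columns. A generous per-row bound of $\oO(n)$, which covers computing the matching sets and the index bookkeeping, gives $\oO(mn)$ per candidate and $\oO(m\,n^{\kappa+2})$ in total, matching the statement. Correctness as a CHF follows because any returned set agrees with the oracle on all $m$ examples, by the equivalence above.

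The main obstacle is interpretive rather than technical. The proposition says "if one exists", and I must make sure the downward-closure argument is aligned with how feasibility is defined under $f^{\leq\kappa}$ (the threshold is "$\leq\kappa$" agreeing influencers). I also need to note that this monotonicity relies on every example being positive. A single negative example would need $|F\cap M^c|>\kappa$, which breaks downward closure. That is why the proposition is restricted to always-changing samples.
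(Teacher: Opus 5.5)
Your proposal is correct and is essentially the paper's proof. Subsets inherit the constraint $|F\cap M(\ell_k)^c|\le\kappa$, so a feasible set of size at most $\kappa+1$ exists, and exhaustive search over the $\oO(n^{\kappa+1})$ small subsets at $\oO(mn)$ per check finds one in $\oO(m\,n^{\kappa+2})$ time.

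The one difference is in how smaller candidates are checked. The paper rejects any candidate contained in some $M(\ell_k)^c$, implicitly demanding at least one disagreeing influencer, whereas you use the literal $f^{\leq\kappa}$ test, under which such candidates pass trivially. This changes nothing whenever some $(\kappa+1)$-subset passes, because the two tests coincide at that size. Separately, in your closing remark a never-changing example actually imposes $|F\cap M(\ell)|>\kappa$, since there $M(\ell)$ is the set of agreeing agents.
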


The final step is to combine the always-changing and never-changing labellings. The learner knows that feasible sets must contain at least $\kappa$ influencers in the matching set of every never-changing labelling. So, if it checks all sets of $\kappa$ or more agents, progressively increasing in size, it will not need to check any set larger than the real influencer set $G_i$.

\begin{theorem}
    Let $N = [n]$ and $(N \cup \{i\}, G, f^{\leq \kappa})$ be a social network under an all-but-$\kappa$ protocol with $\kappa \geq 0$. If $G_i$ is the influencer set of agent $i$, then there exists a CHF that finds a feasible influencer set $F \subseteq N$ in  $\oO(m \, n^{|G_i| + 1})$ time for any labelling sample of size $m$. 
    \label{theo: all but k solution}
\end{theorem}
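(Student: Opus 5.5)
The CHF will be a brute-force search over candidate influencer sets in order of increasing size. It stops at the first set that is consistent with the whole sample.

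First, split the sample into the always-changing labellings (those with $f^{\leq\kappa}(\ell_k,G_i)$ $\mathit{True}$) and the never-changing ones (those with $\mathit{False}$). Compute the matching transformation $\MM$ in $\oO(mn)$ time. The consistency conditions can then be read off $\MM$. A set $F$ is feasible if and only if
\begin{itemize}
\item for every always-changing row $k$, $F$ contains at most $\kappa$ agents with $\MM_{k,j}=-1$, i.e.\ agents outside $M(\ell_k)$, which agree with $i$; and
\item for every never-changing row $k$, $F$ contains at least $\kappa+1$ agents with $\MM_{k,j}=1$, i.e.\ agents in $M(\ell_k)$, which agree with $i$.
\end{itemize}
Given $F$, each row can be checked in $\oO(|F|)\subseteq\oO(n)$ time, so one candidate costs $\oO(mn)$.

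The algorithm enumerates subsets $F\subseteq N$ by size $s=0,1,2,\dots$. For each one it runs the check above and outputs the first $F$ that passes. Two refinements are available but not needed for the bound:
\begin{itemize}
\item If the sample contains a never-changing labelling, sizes $s\le\kappa$ can be skipped, since such sets can never contain $\kappa+1$ agreeing influencers.
\item If the sample is entirely always-changing, Proposition \ref{prop: always changing all but k} already handles it. The empty set, or any set of size at most $\kappa$, is trivially consistent there.
\end{itemize}

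Correctness follows because the true influencer set $G_i$ is itself consistent with every labelling in the sample, by definition of the oracle's labels. So the search terminates no later than size $s=|G_i|$, and whatever it returns passed every row check, hence is feasible.

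For the runtime, the number of candidates examined is at most $\sum_{s=0}^{|G_i|}\binom{n}{s}=\oO(n^{|G_i|})$. Each costs $\oO(mn)$, giving $\oO(m\,n^{|G_i|+1})$ in total. The learner does not need to know $|G_i|$: the size bound is only used in the analysis, because termination by size $|G_i|$ is guaranteed.

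The main delicate point is the accounting. The sum of binomials must be bounded by $\oO(n^{|G_i|})$ uniformly; this holds since $\binom{n}{s}\le n^s$ and the geometric sum is dominated by its last term for $n\ge2$. The per-candidate check must also be $\oO(mn)$, rather than something larger from recomputing matching sets; this is ensured by precomputing $\MM$ once.
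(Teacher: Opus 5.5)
Your proposal is correct and follows the same route as the paper. The paper also searches candidate sets in order of increasing size, and the search must halt by size $|G_i|$ because $G_i$ itself is consistent with every sampled labelling; this gives $\oO(n^{|G_i|})$ candidates at $\oO(mn)$ each. Your explicit row conditions on $\MM$ (at most $\kappa$ agreeing influencers on changing rows, at least $\kappa+1$ on non-changing rows) and your cost accounting make precise what the paper only sketches in the main text. The $\kappa+1$ threshold is the right one under the definition of $f^{\leq\kappa}$, whereas the paper's sketch loosely says ``at least $\kappa$''.
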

The complexity of the CHF depends on the size of the real influencer set $G_i$, not the $\kappa$ parameter in the all-but-$\kappa$ model. For networks where $\vert G_i\vert$ is bounded by a constant (e.g., sparse regular graphs), the complexity remains polynomial. However, the approach may become inefficient without this assumption. The upcoming section explores the boundary of efficient solutions for the majority dynamics case.\footnote{\; Notice that majority dynamics is a special case of the all-but-$\kappa$ and $\tau$-margin protocols, with either $\kappa = \ceil{\vert G_i\vert/2}-1$ or $\tau=0$.}

\section{Hardness for Learning Feasible Influencer Sets Under Majority Dynamics}
\label{sec: hitting set problem}

Our task for networks with majority dynamics is to find a set of agents $F\subseteq N$ such that at least half of them appear in every sampled labelling's matching set; strictly more if the target agent changes opinion, and half or more if not. A naive approach would return the intersection of all matching sets. But what if this intersection is empty? We apply the known computational hardness of the Hitting Set problem (see, e.g., \cite{dasgupta_algorithms_2006}) to prove that finding a feasible influencer set $F$ in these networks is $\NP$-complete. To do so, we adapt the approach in \cite{rudoy_np-hardness_2017}, and construct a matching transformation $\MM$ and an oracle prediction from a Hitting Set instance.

\begin{definition}[Hitting Set problem]
    Given a family of sets $\{S_1, S_2,..., S_{m}\}$ and a budget $d>0$, we wish to find, if possible, a set $C$ of size $d$ that has a non-empty intersection with every set $S_k$, $k=1,\dots, m$. \label{def: Hitting Set problem}
\end{definition}

\begin{theorem}
    The Hitting Set problem for $\{S_1, S_2,..., S_{m}\}$ with budget $d>0$ reduces to finding a feasible set of influencers for a target agent in a network with $n+1$ agents, where $n = \vert \bigcup_{k=1}^{m} S_k \vert + d + 1$, consistent with $m+d+2$ examples from a majority dynamics protocol. \label{theo: Hitting set reduction} 
\end{theorem}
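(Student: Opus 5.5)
The plan is a many-one reduction. From a Hitting Set instance we build a matching transformation $\MM$ together with oracle predictions. We then show that a hitting set of size at most $d$ exists if and only if some $F\subseteq N$ is a feasible influencer set under $f^+$. By the Remark, $F$ is feasible exactly when every row-sum of $\MM_F$ is nonnegative, and strictly positive on rows where agent $i$ changes opinion. A row with matching set $M$ therefore imposes $|F\cap M| - |F\setminus M| > 0$ on a changing row and $\geq 0$ on a non-changing row.

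\textbf{Agents.} Let $U=\bigcup_k S_k$. Add $d$ dummy agents $D$ and one special agent $z$, so $N=U\cup D\cup\{z\}$ and $n=|U|+d+1$. Each labelling is recovered from its matching set and prediction by inverting Definition~\ref{def: matching set}: on a change row the agents in $M$ disagree with $i$, and on a non-change row they agree with $i$.

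\textbf{Examples.}
\begin{enumerate}
\item For each $k$, a changing row with $M=S_k\cup D$.
\item A changing row with $M=D\cup\{z\}$.
\item A non-changing row with $M=U\cup\{z\}$.
\item $d$ further rows to reach the count $m+d+2$. For these I would take, for each dummy $x\in D$, a non-changing row with $M=N\setminus\{x\}$ (or simply duplicates of row~2). Such rows are satisfied by the intended $F$ and only weaken nothing in the converse.
\end{enumerate}

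\textbf{Completeness.} If $C$ is a hitting set, pad it to $|C|=d$ inside $U$ (if $|U|<d$ the instance is trivial) and set $F=C\cup D\cup\{z\}$. Then:
\begin{itemize}
\item row~2 has sum $(d+1)-d=1>0$;
\item row~3 has sum $(d+1)-d=1\ge 0$;
\item row $k$ has sum $|C\cap S_k|+d-(|C\setminus S_k|+1)=2|C\cap S_k|-1>0$, since $C$ hits $S_k$.
\end{itemize}
One then checks the padding rows directly; for instance, a row $N\setminus\{x\}$ with $x\in F$ gives sum $2d-1\ge 0$.

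\textbf{Soundness.} Let $F$ be feasible and put $C=F\cap U$. Row~2 gives $|C| < |F\cap(D\cup\{z\})| \le d+1$, so $|C|\le d$. Now suppose $C\cap S_k=\emptyset$ for some $k$. Row $k$ then requires $|F\cap D| > |C| + \Ind{z\in F}$. Row~3 requires $|C|+\Ind{z\in F}\ge |F\cap D|$. These two conditions contradict each other, so $C$ hits every $S_k$, with $|C|\le d$.

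\textbf{Main obstacle.} The delicate step is choosing the matching sets so that a single majority inequality per set encodes ``$F$ meets $S_k$''. Rows~2 and~3 are the crux: they pin $|F\cap U|$ relative to the dummy block, so that missing a set forces an impossible imbalance. I would also verify that the padding rows introduce no new constraint violated by the intended $F$. Finally, the construction is clearly polynomial in $m$, $|U|$ and $d$.
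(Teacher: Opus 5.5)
Your reduction is correct, but it is organised differently from the paper's.

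\textbf{The paper's route.} It uses $d+1$ auxiliary agents $a_1,\dots,a_{d+1}$ and spends $d+2$ examples on gadgets. The row whose matching set is $\{a_1,\dots,a_{d+1}\}$ gives $|A|>|B|$. The $d+1$ rows with matching set $\{a_j\}\cup\{b_1,\dots,b_{\hat n}\}$ then force every $a_j$ into $F$ and force $|B|=d$. So any feasible $F$ is pinned down to ``all auxiliaries plus exactly $d$ element-agents'' before the set rows reduce to $2|M(\ell_k)\cap B|-1>0$.

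\textbf{Your route.} You never determine the shape of $F$. Row~2 alone gives the budget $|C|\le d$. Row~3 caps $|F\cap D|$ by $|C|+\Ind{z\in F}$, which is exactly what a set row missed by $C$ would have to exceed. The essential construction therefore needs only $m+2$ examples. The other $d$ are padding, which is harmless because extra constraints cannot hurt soundness and you checked the intended $F$ on them. The paper's construction is more rigid; yours is shorter and gets the contradiction in one step.

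\textbf{Always-changing samples.} Every example in the paper's sample is a changing one, and yours, as written, is not. The proof of Theorem~\ref{theo: majority NP vs RP} invokes this property. It also makes the contrast with Proposition~\ref{prop: always changing all but k} sharp: always-changing samples are easy for all-but-$\kappa$ with fixed $\kappa$, yet hard for majority. You lose nothing essential, though. The intended $F$ gives row~3 the sum $1>0$, and your padding rows $N\setminus\{x\}$ have sum $2d-1>0$. Soundness only uses the weak inequality from row~3 and never uses the padding rows. So you may declare every row changing and recover an always-changing sample.

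\textbf{Exact size $d$.} The paper's Hitting Set asks for a set of size exactly $d$. You should therefore pad the $C$ obtained in your soundness step inside $U$, as you already do in completeness; the case $|U|<d$ is trivial.
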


\begin{proof}
    We encode any hitting set instance into a labelling sample for a network with majority dynamics. 
    Let $\{S_1, S_2,..., S_{m}\}$ be a family of sets and $d>0$ a budget. Define $\hat{n}:= \vert \bigcup_{k=1}^{m} S_k\vert$, $\hat{m}:=m+d+2$, $n:=\hat{n}+d+1$ and $N:=[n]$. 
    We construct a labelling sample $(\ell_k)_{k=1}^{\hat{m}}$ for a social network 
    $(N\cup \{ i\}, G, f^+)$ following \cite{rudoy_np-hardness_2017}. 
    
    We split the agents in $N$ into two: $\{a_1, \dots, a_{d+1}\}$ and $\{b_1, \dots, b_{\hat{n}}\}$. Further, we assign to each agent $b_j\in \{b_1, \dots, b_{\hat{n}}\}$ an element $\smash{s_j\in\bigcup_{k=1}^{m} S_k}$. The examples are built such that all auxiliary agents in $\{a_1, \dots, a_{d+1}\}$ are in the matching set of $\ell_{\hat{m}}$. For the other labellings, $\ell_1, \dots, \ell_{\hat{m}-1}$, agent $a_j$ belongs to $M(\ell_k)$ if 
    \begin{align*}
        \begin{cases}
         2\leq j\leq d+1 &\text{for } k=1, \dots, m, \text{ or }\\
         j=k-m &\text{for } k=m+1, \dots, \hat{m}-1.
        \end{cases}
    \end{align*}
    For the agents in $\{b_1, \dots, b_{\hat{n}}\}$, $b_j \in M(\ell_k)$ when $s_j\in S_k$ for $k=1, \dots, m$, plus for $k=m+1, \dots, \hat{m}-1$. See Figure \ref{fig: reduction as majority matrix} for a graphical example of a matching transform $\MM$ constructed in this way.

    We show that a hitting set $C\subseteq \bigcup_{k=1}^m S_k$, $\vert C\vert = d$, exists if and only if there is a feasible influencer set $F\subseteq N$ for which the target agent $i$ changes opinion on all these labellings. Without loss of generality, we suppose $C:=\{s_1, \dots, s_d\}$ and show that $F =\{a_1, \dots, a_{d+1}\}\cup \{b_1, \dots, b_d\}$.
    
    $(\Rightarrow)$ Assume $C$ is a hitting set for $\{S_1, S_2,..., S_{m}\}$. Since agent $i$ always changes opinion, $F:=\{a_1, \dots, a_{d+1}\}\cup \{b_1, \dots, b_d\}\subseteq N$ is a feasible influencer set if and only if
    \begin{align*}
        \vert M(\ell_k) \cap F\vert > \vert M(\ell_k)^c \cap F\vert, \text{ for } k=1, \dots, \hat{m}.
    \end{align*}
    This holds for all our examples because $\vert M(\ell_k) \cap F\vert \geq d+1$ while $\vert M(\ell_k)^c \cap F\vert \leq d$.
    
    More specifically, for $k=1, \dots, m$, there are $d$ agents in $\{a_1, \dots, a_{d+1}\}$ in each matching set. Further, since agents in $b_1, \dots, b_d$ are associated to the elements in the hitting set $C$, we have that
    \begin{align*}
        \vert M(\ell_k) \cap \{b_1, \dots, b_d\}\vert \geq 1 \text{ and } \\ 
        \vert M(\ell_k)^c \cap \{b_1, \dots, b_d\}\vert \leq d -1.
    \end{align*}
    
    In contrast, all agents in $\{b_1, \dots, b_d\}$ plus one agent from $\{a_1, \dots, a_{d+1}\}$ will be in $M(\ell_k)$ for $k=m+1, \dots, \hat{m} -1$. Therefore, $\vert M(\ell_k)^c \cap F\vert \leq d$ since $\vert F \vert = 2d+1$. Similarly, all (and only) the agents in $\{a_1, \dots, a_{d+1}\}$ belong to the matching set when $k=\hat{m}$.

    $(\Leftarrow)$ Assume $F\subseteq N$ is a feasible set of influencers. We need to show that 
    $$F=\{a_1, \dots, a_{d+1}\}\cup \{b_1, \dots, b_d\},$$
    where the elements $\{s_1, \dots, s_d\} \subseteq \bigcup_{k=1}^m S_k$ associated to agents $b_1, \dots, b_d$ create a hitting set.
    
    Let $F:=A\cup B$ for some $A \subseteq \{a_1, \dots, a_{d+1}\}$ and $B \subseteq \{b_1, \dots, b_{\hat{n}}\}$. However, since $F$ is a feasible influencer set, it must satisfy that
    \begin{align}
        \vert M(\ell_k) \cap F\vert > \vert M(\ell_k)^c \cap F\vert, \text{ for } k=1,\dots, \hat{m}. \label{eq: condition M F Mc F}
    \end{align}
    
    For the $\hat{m}$-th example, we have that $\vert M(\ell_{\hat{m}}) \cap A\vert = \vert A \vert$ and $\vert M^c(\ell_{\hat{m}}) \cap B\vert = \vert B \vert$. So $\vert A \vert > \vert B \vert$ for \eqref{eq: condition M F Mc F} to hold. This ensures that our hitting set remains within budget $d$.

    Now, for $k=m+1, \dots, \hat{m} -1$, $\vert M(\ell_k) \cap B\vert = \vert B \vert$ while $\vert M(\ell_k) \cap A\vert = \Ind{a_{k-m}\in A}$, where $\Ind{a_{k-m}\in A}$ is $1$ if $a_{k-m}\in A$ and $0$ otherwise. Therefore,  
    \begin{align}
        &\vert M(\ell_k) \cap F\vert - \vert M(\ell_k)^c \cap F\vert \notag\\
        =\;&(\Ind{a_{k-m}\in A} + \vert B \vert) - (\vert A \vert - \Ind{a_{k-m}\in A} ) \notag\\
        =\;&2\times\Ind{a_{k-m}\in A} + \vert B \vert - \vert A \vert. \label{eq: ak in F}
    \end{align}
Since \eqref{eq: ak in F} has to be positive with $\vert A \vert > \vert B \vert$, it must be that $a_{k-m}\in A$ for $k=m+1, \dots, \hat{m} -1$. 
Therefore, $A = \{a_1, \dots, a_{d+1}\}$ and $\vert B\vert > d-1$. More so, $\vert B\vert =d$. 

Finally, for $k=1,\dots, m$, we have $\vert M(\ell_k) \cap A\vert = d$. Thus, to satisfy \eqref{eq: condition M F Mc F} we need $$\vert M(\ell_k) \cap F\vert - \vert M(\ell_k)^c \cap F\vert = 2\, \vert M(\ell_k) \cap B\vert - 1 > 0.$$
    This implies that $\vert M(\ell_k) \cap B\vert>1$ for every $k=1,\dots, m$. However, recall that every agent $b_j\in \{b_1, \dots, b_{\hat{n}}\}$ that belongs to $M(\ell_k)$ has an element $s_j\in S_k$ associated with it, making $\{s_1, \dots, s_{d}\}$ a hitting set for $\{S_1, \dots, S_m\}$.
\end{proof}

\begin{figure}[t]
    \centering
    \includegraphics[width=0.75\columnwidth]{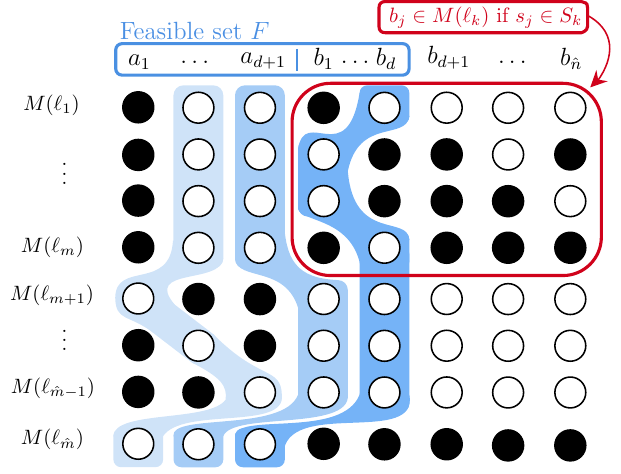}
    \caption{Graphical representation of the matching transformation derived from a Hitting Set instance. Budget is $d=2$ and the input sets are $S_1=\{s_2, s_3, s_4, s_5\}$, $S_2=\{s_1, s_4\}$, $S_3=\{s_1, s_5\}$ and $S_4=\{s_2\}$. One possible hitting set is $C=\{s_1, s_2\}$. So, the set $F:=\{a_1, a_2,a_3\}\cup \{b_1, b_2\}$, consisting of all the auxiliary agents together with agents $\{b_1, b_2\}$ associated with $C$ is a feasible influencer set.}
    \label{fig: reduction as majority matrix}
\end{figure}

\begin{remark}
    Figure \ref{fig: reduction as majority matrix} illustrates the matching transformation $\MM$ built out of a Hitting Set instance following the steps in the proof of Theorem \ref{theo: Hitting set reduction}. We use the same notation as in Figure~\ref{fig: exm majority transformation}, where \majority{white} represents the agents in the matching set. If there is a solution to the Hitting Set problem, we can trace $d+1$ paths from the top to the bottom of $\,\MM$ flowing through the matching sets using just $2d+1$ agents. If we restrict $\MM$ to only the columns of agents in $F$, denoted by $\MM_F$, then more than half of the agents match the oracle in all the rows. Also, the far-right path is made of elements in the non-empty intersection of the hitting set with the sets in $\{S_1, \dots, S_m\}$.
\end{remark}

The Hitting Set problem is known to be \textbf{NP}-complete (see, e.g., \cite{dasgupta_algorithms_2006}). Yet, we are also interested in the Randomised Polynomial-time (\textbf{RP}) complexity class. Algorithms in \textbf{RP} run in polynomial time in the input size. Also, they guarantee no false positives and a false-negative probability of less than $\nicefrac{1}{2}$. 

In our context, false positives occur when an influencer set is retrieved but it is inconsistent with at least one of the examples in $(\ell_k)_{k=1}^m$ (i.e., the agent changes opinion when predicted not to, or vice versa). A false negative occurs if the algorithm claims there is no feasible influencer set for the given sample when such a set actually exists. Therefore, the reduction from Theorem \ref{theo: Hitting set reduction} yields that, unless $\mathbf{NP}=\mathbf{RP}$, no polynomial-time algorithm is capable of PAC learning feasible influencer sets in networks with majority dynamics.

\begin{theorem}
   Let $f^+$ be the class of majority dynamics protocols in social networks formed by $n+1$ agents who can hold binary opinions in $L$. If there exists an algorithm $\lL$ that is a CHF such that, for every $G_i \subseteq N$, distribution $\dD$ over $L^N$ and error parameter $0<\varepsilon<1$, $\lL$ runs in polynomial time for $n:=\vert N\vert$ and $\nicefrac{1}{\varepsilon}$ and, with probability of at least $\nicefrac{1}{2}$, outputs an influencer set $F\subseteq N$ satisfying $\PP_{\ell \sim \dD}(f^+(\ell,F) = f^+(\ell,G_i)) \geq 1 - \varepsilon$, then $\mathbf{NP}=\mathbf{RP}$.
   \label{theo: majority NP vs RP}
\end{theorem}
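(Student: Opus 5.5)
We follow the standard Pitt--Valiant argument that an efficient PAC learner for a class whose consistency problem is \NP-hard yields an \RP\ algorithm for that problem. By Theorem \ref{theo: Hitting set reduction} and the \NP-completeness of Hitting Set, deciding whether a sample $(\ell_k)_{k=1}^{\hat m}$ with labels from $f^+$ admits a feasible influencer set is \NP-hard. It therefore suffices to place this consistency problem in \RP\ using $\lL$.

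\textbf{Construction of the \RP\ algorithm.} Take the sample produced by the reduction in Theorem \ref{theo: Hitting set reduction}: $\hat m = m+d+2$ labellings, each with a label, over $n$ agents. Let $\dD$ be the uniform distribution over these $\hat m$ labellings. We simulate the oracle $\text{Ex}(G_i,\dD)$ by drawing an index uniformly and returning the corresponding labelled labelling. Set $\varepsilon := 1/(2\hat m)$ and run $\lL$, which takes time polynomial in $n$ and $2\hat m$, hence polynomial in the instance size. Let $F$ be its output. We check in $\oO(\hat m\, n)$ time whether $F$ agrees with every label, i.e.\ whether the row-sums of $\MM_F$ have the required signs. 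If so, answer ``yes''; otherwise answer ``no''.

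\textbf{Correctness.} For a ``no'' instance, no feasible set exists, so the final check always fails and the algorithm never answers ``yes''; there are no false positives. For a ``yes'' instance, some feasible $G_i$ exists, and the labels are exactly $f^+(\ell_k,G_i)$, so the simulated oracle is a legitimate oracle for the target $G_i$. With probability at least $\nicefrac12$, $\lL$ then returns $F$ with error at most $\varepsilon < 1/\hat m$ under $\dD$. Since every labelling carries mass $1/\hat m$ under $\dD$, any disagreement with even one sample point would give error at least $1/\hat m$. Hence $F$ is consistent with all $\hat m$ examples, and the algorithm answers ``yes''. This gives one-sided error with success probability at least $\nicefrac12$, so the consistency problem is in \RP. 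Since it is \NP-hard, $\NP \subseteq \RP$, and together with the known inclusion $\RP \subseteq \NP$ we obtain $\NP = \RP$.

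\textbf{Main obstacle.} The delicate step is matching the oracle model to the reduction's sample. If some labelling appears with inconsistent labels, no $G_i$ exists; this case is simply a ``no'' instance and is handled by the verification step. Duplicate rows only increase their weight under $\dD$, so the threshold $\varepsilon < 1/\hat m$ still works. We must also ensure that a ``yes'' instance yields a target that is genuinely in the concept class, i.e.\ some $G_i \subseteq N$ is consistent with all labels. That is exactly the $(\Rightarrow)$ direction of Theorem \ref{theo: Hitting set reduction}. Finally, we should note that $\lL$ outputs a set $F\subseteq N$ and so can be evaluated with $f^+$, which is what makes the proper-learning hardness transfer go through.
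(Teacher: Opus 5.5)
Your proof is correct and follows essentially the same route as the paper: take the uniform distribution over the $m+d+2$ labellings produced by the Hitting Set reduction, choose $\varepsilon$ strictly below $1/\hat m$ (the paper uses $\varepsilon = 1/(m+d+3)$, you use $1/(2\hat m)$) so that any $\varepsilon$-accurate hypothesis must be consistent with every example, and conclude via the \NP-completeness of Hitting Set. Your explicit consistency check on the output is a cleaner way to get one-sided error than the paper's assumption that $\lL$ returns nothing when no feasible set exists. It should be paired with halting $\lL$ after its polynomial time bound, since that bound is only guaranteed when the oracle is genuine.
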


\begin{proof}[Proof (Outline)]
   Since $\lL$ can find a feasible influencer set in polynomial time for any distribution of network labellings, for any Hitting Set instance with $m$ sets and budget $d$, consider the uniform distribution over network labellings restricted to the examples built as in the proof of Theorem \ref{theo: Hitting set reduction}. Over this distribution, taking $\varepsilon\leq (m+d+3)^{-1}$ implies $\lL$'s solution perfectly fits the oracle predictions, making it a CHF. Further, this solution can be translated back to a Hitting Set instance and into any $\NP$-hard problem. As an $\RP$-algorithm, $\lL$ establishes the link between the classes.
\end{proof}

\begin{cor}
    Assuming $\mathbf{NP}\neq\mathbf{RP}$, there is no polynomial-time algorithm for learning the influencers of an agent in a social network with majority dynamics. \label{cor: assume NP neq RP}
\end{cor}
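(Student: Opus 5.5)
The corollary follows by contraposition from Theorem \ref{theo: majority NP vs RP}, so the plan is to make the phrase ``polynomial-time algorithm for learning the influencers'' precise in PAC terms and show that such an algorithm meets that theorem's hypothesis. I read ``learning'' in the sense of the PAC definition of Section \ref{sec: math preliminaries}: a learner $\lL$ that, for every target $G_i\subseteq N$ and every distribution $\dD$ over $L^N$, runs in time polynomial in $n$, $\nicefrac{1}{\varepsilon}$ and $\nicefrac{1}{\delta}$ and outputs $F\subseteq N$ with $\PP_{\ell\sim\dD}(f^+(\ell,F)=f^+(\ell,G_i))\geq 1-\varepsilon$ with probability at least $1-\delta$.

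\emph{Step 1: suppose such an efficient learner $\lL$ exists and fix $\delta$.} Fixing $\delta=\nicefrac{1}{2}$, or any constant below $\nicefrac{1}{2}$, yields an algorithm that runs in time polynomial in $n$ and $\nicefrac{1}{\varepsilon}$. With probability at least $\nicefrac{1}{2}$ it outputs a hypothesis of error at most $\varepsilon$.

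\emph{Step 2: show the hypotheses of Theorem \ref{theo: majority NP vs RP} hold.} Theorem \ref{theo: majority NP vs RP} is phrased for a CHF, but its proof outline uses only the success guarantee. The key observation is to take $\dD$ uniform over the $m+d+2$ labellings produced in the proof of Theorem \ref{theo: Hitting set reduction}, with $\varepsilon<(m+d+2)^{-1}$. Then error at most $\varepsilon$ forces zero error on the support, so the output is consistent with every example. The running time stays polynomial because $\nicefrac{1}{\varepsilon}$ and $n=\hat n+d+1$ are polynomial in the Hitting Set instance size. One subtlety needs handling: the reduction uses the oracle labels ``always changes'', which are realisable only when a hitting set exists. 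So the $\RP$ procedure runs $\lL$ on these labelled examples, checks the returned $F$ against all $m+d+2$ rows in polynomial time, and accepts only if $F$ is consistent. By the $(\Leftarrow)$ direction of Theorem \ref{theo: Hitting set reduction}, a consistent $F$ yields a hitting set of size $d$, so there are no false positives. When a hitting set exists, the $(\Rightarrow)$ direction gives a realisable target, and $\lL$ succeeds with probability at least $\nicefrac{1}{2}$.

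\emph{Step 3: conclude.} Steps 1 and 2 place Hitting Set, which is $\NP$-complete, in $\RP$. Hence $\NP\subseteq\RP$, and since $\RP\subseteq\NP$ always holds, $\NP=\RP$. This contradicts the assumption $\NP\neq\RP$, so no such polynomial-time learner exists.

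The main obstacle is Step 2's handling of non-realisable inputs. On instances with no hitting set, $\lL$ receives labels not generated by any $G_i$, so its behaviour, including its running time, is unconstrained. I would enforce a polynomial clock on $\lL$, aborting after the polynomial bound from Step 1, and rely on the final consistency check to reject in that case. This keeps one-sided error and polynomial time regardless of how $\lL$ behaves off its promise.
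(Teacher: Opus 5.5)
Your proposal is correct and follows the paper's route. The corollary is the contrapositive of Theorem \ref{theo: majority NP vs RP}, whose proof likewise takes $\dD$ uniform over the $m+d+2$ reduction examples with $\varepsilon=(m+d+3)^{-1}$ to force a consistent output and hence an $\RP$ algorithm for Hitting Set; your polynomial clock and final consistency check on instances with no hitting set properly tighten the paper's unjustified assumption that $\lL$ ``halts without returning anything'' there.
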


\section{A Heuristic for Majority Dynamics} 
\label{sec: Waterfall algorithm} 

Theorem~\ref{theo: majority NP vs RP} and Corollary~\ref{cor: assume NP neq RP} show that retrieving, even approximately, the structure of a social network just from observing its majority dynamics is computationally intractable. This, however, should not prevent us from designing practical heuristics. Recall that our first naive idea was to return the agents in the intersection of all matching sets as a feasible influencer set. While unanimity dynamics guarantee a non-empty intersection, that is not the case for majority dynamics. Thus, we relax our approach to consider agents outside some matching sets as feasible influencers, provided that these are never the strict majority. 


Formally,  given an $m\times n$ matching transformation $\MM$, an oracle prediction $f^+(\ell_k, G_i)$ for $k=1,\dots, m$, and a subset $F\subseteq N$, the rows of the restricted matrix $\MM_F$ fall into:

\begin{enumerate}
    \item \textbf{Consistent (C):} If $\vert M(\ell_k) \cap F\vert > \vert M(\ell_k)^c \cap F\vert$.
    \item  \textbf{Barely consistent (BC):} If $\vert  M(\ell_k) \cap F\vert - \vert  M(\ell_k)^c \cap F\vert=1$ and $f^+(\ell_k,G_i)$ is $\mathit{True}$.
    \item \textbf{Consistent tie (CT):} If $\vert M(\ell_k) \cap F\vert = \vert M(\ell_k)^c \cap F\vert$ and $f^+(\ell_k,G_i)$ is $\mathit{False}$.
    \item \textbf{Inconsistent tie (IT):} If $\vert M(\ell_k) \cap F\vert = \vert M(\ell_k)^c \cap F\vert$ and $f^+(\ell_k,G_i)$ is $\mathit{True}$.
    \item \textbf{Inconsistent (I):} If $\vert M(\ell_k) \cap F\vert < \vert M(\ell_k)^c \cap F\vert$.
\end{enumerate}

As we append an agent to $F$, each row becomes more consistent or inconsistent depending on whether the agent is in $M(\ell_k)$ or $M(\ell_k)^c$. Yet, a row in a consistent state cannot become inconsistent (or vice versa) without going through \textbf{CT} if $f^+(\ell_k, G_i)$ is $\mathit{False}$, or \textbf{BC} if it is $\mathit{True}$. Moreover, our learner $\lL$ finds a feasible influencer set $F^* \subseteq N$ when no rows in $\MM_{F^*}$ are in states \textbf{I} or \textbf{IT}.

All inconsistencies must be resolved simultaneously upon adding the final agent, so no labelling is prioritised over another. For a given $F\subseteq N$, we say a labelling $\ell \in (\ell_k)_{k=1}^m$ \emph{needs rescuing} if it does not admit adding any agent from $M(\ell)^c$ (i.e., its state is \textbf{I, IT, CT} or \textbf{BC}). Our heuristic employs a greedy strategy when adding agents to $F$. It picks the agent belonging to the most matching sets of labellings that need rescuing. We call this the \texttt{Waterfall} algorithm. Its pseudocode is shown in Algorithm~\ref{alg: Waterfall algorithm}, and the implementation used in our experiments will be made publicly available upon publication at \href{https://github.com/lf-estrada/Waterfall-algorithm.git}{https://github.com/.../Waterfall-algorithm.git}

\begin{algorithm}[ht]
\caption{Waterfall algorithm}
\label{alg: Waterfall algorithm}
\textbf{Input:} An oracle prediction $f^+(\ell_k,G_i)$ for $k=1,\dots, m$ and its $m\times n$ matching transformation $\MM$.\\
Assume $\cap_{k=1}^m M(\ell_k) = \emptyset$ and let $N:=\text{columns}(\MM)$.\\
\textbf{Output:} A set of influencers $F$ that makes all rows in $\MM_{F}$ consistent or a consistent tie.
\begin{algorithmic}[1]
\If{$\{\}$ is a feasible set} \textbf{return} $\{\}$.
\EndIf
\For{$s \in N$}
\State Initialise $F = \{s\}$.
\While{$F \neq N$}
\State Retrieve state vector $\bar{\rr}$ for the rows of $\MM_{F}$.
\If{$\exists \; k \leq m : \bar{\rr}_k \in \{ I, IT\}$}
\LComment{Some labellings need rescuing.}
\State Set $R:= \{ k \leq m : \bar{\rr}_k \in \{\textbf{I, IT, CT, BC}\}\}$
\State Choose an agent $j\in N\mysetminus F$ from \\ $\argmax_{j\in N\mysetminus F} \sum_{r\in R} \Ind{j\in M(\ell_r)}$.
\State Add $j$ to $F$.
\Else
\State \textbf{return} $F$.
\EndIf
\EndWhile
\vspace{0.05cm}
\LComment{Initialise waterfall on another source if the while loop ends and there are still inconsistencies.}
\EndFor
\If{$N$ is a feasible set} \textbf{return} $N$.
\Else{ \textbf{return} ``Terminates without finding a Waterfall."} 
\EndIf
\end{algorithmic}
\end{algorithm}

\begin{proposition}
    \label{prop: complexity if found}
    The \texttt{Waterfall} runs in $\oO(m\, n^3)$ time for any $m\times n$ matching matrix $\MM$ and oracle prediction.
\end{proposition}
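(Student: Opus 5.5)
The bound will come from counting loop iterations and charging each one a per-iteration cost. The key design choice is to maintain, for the current set $F$, a vector of row differences
\[
\mathbf{d}_k = \vert M(\ell_k)\cap F\vert - \vert M(\ell_k)^c\cap F\vert = \sum_{j\in F}\MM_{k,j},
\]
rather than recomputing $\MM_F$ from scratch.

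The two feasibility checks outside the main loop are cheap. Testing whether $\{\}$ is feasible takes $\oO(m)$: every difference is $0$, so we only check that no prediction is $\mathit{True}$. Testing $N$ takes $\oO(mn)$, by summing each row once.

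The outer \texttt{for} loop runs over at most $n$ sources $s$. For a fixed source, each pass of the \texttt{while} loop either returns or adds one new agent to $F$. Since $F$ grows strictly and the loop stops at $F=N$, there are at most $n-1$ passes. Hence there are $\oO(n^2)$ passes of the loop body in total.

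It remains to show that one pass costs $\oO(mn)$. Each step is bounded as follows.
\begin{enumerate}
\item \emph{Updating the state vector.} When we initialise $F=\{s\}$, set $\mathbf{d}_k=\MM_{k,s}$. When we add $j$, update $\mathbf{d}_k \mathrel{+}= \MM_{k,j}$ for all $k$. Either way this is $\oO(m)$.
\item \emph{Classifying rows.} Each row's state (\textbf{C}, \textbf{BC}, \textbf{CT}, \textbf{IT} or \textbf{I}) is determined in $\oO(1)$ from $\mathbf{d}_k$ and $f^+(\ell_k,G_i)$. This gives $\bar{\rr}$, the check for an \textbf{I} or \textbf{IT} entry, and the set $R$, all in $\oO(m)$.
\item \emph{The greedy selection.} We compute $\sum_{r\in R}\Ind{j\in M(\ell_r)}$ for every $j\in N\mysetminus F$ by scanning the rows of $R$ in $\MM$. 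This costs $\oO(\vert R\vert\cdot n)\subseteq\oO(mn)$, and the argmax is a further $\oO(n)$.
\end{enumerate}
So each pass costs $\oO(mn)$.

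Multiplying, the total is $\oO(n^2)\cdot\oO(mn)+\oO(mn)=\oO(mn^3)$, which proves the claim.

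The only real obstacle is the argmax step, since it dominates the per-iteration cost. The bound needs it to stay within $\oO(mn)$ rather than recomputing from the full restricted matrix $\MM_F$. The direct scan of the $R$ rows achieves this, so no more refined data structure is needed. I would also note explicitly that ties in the argmax are broken arbitrarily in $\oO(1)$, so they do not affect the bound.
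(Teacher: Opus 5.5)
Your proof is correct and follows the same counting as the paper: at most $n$ sources, at most $n$ agent additions per source, and $\oO(mn)$ per addition, with the column count for the greedy choice dominating. Maintaining the row differences incrementally and explicitly costing the feasibility checks on the empty set and on $N$ are harmless refinements. The paper does without them because it simply recomputes the row sums of $\MM_{\AW}$, which also costs $\oO(mn)$ per step.
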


 \begin{remark}
    To adapt the \texttt{Waterfall} to any $\tau$-margin protocol, $\tau \geq 0$, we redefine the labellings that need rescuing as those where $\vert M(\ell_k)\cap F\vert - \vert M(\ell_k)^c\cap F\vert \leq \tau$. This preserves the intuition of requiring an additional agent from the matching set; otherwise, the labelling would be inconsistent. The margin is also incorporated into any feasibility check. 
\end{remark}

\subsection{Theoretical Guarantees}
\label{sec: correctness Waterfall}

The \texttt{Waterfall} performs consistency checks in lines 1, 6 and 14, before returning a feasible influencer set. Still, it may incur false-negative errors, in the sense of incorrectly concluding there is no feasible influencer set for the given sample when, in fact, such a set exists. This is because, like any greedy approach, the \texttt{Waterfall} is prone to follow suboptimal routes. For instance, if multiple agents rescue the same number of labellings, the algorithm may select one outside the true influencer set $G_i$, though other feasible sets may exist. So, when can we guarantee the \texttt{Waterfall} will not add too many ``incorrect'' influencers?

\begin{proposition}
    If a feasible influencer set $F\subseteq N$ exists and $\vert F \vert \leq 2$, then the \texttt{Waterfall} will find it.
    \label{prop: feasible pair}
\end{proposition}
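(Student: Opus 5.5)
The plan rests on one observation. The \texttt{Waterfall} only ever returns a set after a feasibility check: line 1, the \texttt{else} branch of the while loop, which fires only when no row of $\MM_F$ is in state \textbf{I} or \textbf{IT}, and line 14. So every set it returns is feasible, and it suffices to show that the algorithm cannot reach the final ``terminates without finding a Waterfall'' line. It may return some feasible set other than $F$, and I read ``find it'' in that sense: any returned set is a feasible influencer set. If an earlier source in the for loop already returns, we are done. So I will show that the iteration with source $s=a$ returns, for a suitably chosen $a\in F$.

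I split by $\vert F\vert$. If $F=\{\}$, line 1 returns it. If $F=\{a\}$, every row of $\MM_{\{a\}}$ has sum $\pm1$, and feasibility forces the sum to be $+1$. Hence $a\in M(\ell_k)$ for all $k$, which already contradicts the standing assumption $\cap_k M(\ell_k)=\emptyset$. Even without that assumption, the iteration with source $a$ sees no \textbf{I} or \textbf{IT} rows and returns $\{a\}$ immediately.

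The main case is $F=\{a,b\}$. Here the row sums of $\MM_F$ lie in $\{-2,0,2\}$, and feasibility means:
\begin{itemize}
\item[(i)] on every changing row, both $a,b\in M(\ell_k)$;
\item[(ii)] on every non-changing row, at least one of $a,b$ lies in $M(\ell_k)$.
\end{itemize}
Take the iteration with source $a$, and classify the rows of $\MM_{\{a\}}$, each of which has sum $\pm1$:
\begin{itemize}
\item[(1)] If $a\notin M(\ell_k)$, the row is \textbf{I}. By (i) it must be non-changing, and by (ii) $b\in M(\ell_k)$.
\item[(2)] If $a\in M(\ell_k)$ and the row is changing, it is \textbf{BC}, and $b\in M(\ell_k)$ by (i).
\item[(3)] If $a\in M(\ell_k)$ and the row is non-changing, it is \textbf{C}.
\end{itemize}
No row is \textbf{CT} or \textbf{IT}, because the sums are odd. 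If there are no \textbf{I} rows, $\{a\}$ is returned. Otherwise the rescue set $R$ consists exactly of the rows of types (1) and (2), and $b$ lies in $M(\ell_r)$ for every $r\in R$. So $b$ attains the maximum possible score $\vert R\vert$ in the argmax, and the chosen agent $j$ also lies in every $M(\ell_r)$, $r\in R$.

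The step I expect to be the main obstacle is exactly this tie-breaking: the greedy rule may pick some $j\neq b$. I handle it by checking directly that $\{a,j\}$ is feasible whenever $j$ has full score $\vert R\vert$:
\begin{itemize}
\item rows of type (1) move from sum $-1$ to $0$ and are non-changing, hence \textbf{CT};
\item rows of type (2) move to sum $2$, hence \textbf{C};
\item rows of type (3) move to sum $0$ or $2$ and are non-changing, hence \textbf{CT} or \textbf{C}.
\end{itemize}
Therefore, at the next pass of line 6 there are no \textbf{I} or \textbf{IT} rows, and the algorithm returns $\{a,j\}$, which is feasible.
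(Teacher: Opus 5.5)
Your proof is correct and follows the paper's argument. You use the same case split on $\vert F\vert$ and, for $F=\{a,b\}$ with source $a$, the same key point: $b$ lies in $M(\ell_r)$ for every $r\in R$, so every argmax agent $j$ does too and $\{a,j\}$ is feasible --- your row-by-row check (including the type (3) rows outside $R$) makes explicit what the paper only asserts with ``perfect match''.

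There is one small slip, in the boundary case $N=\{a,b\}$. There, adding $j$ makes $F=N$, so the while loop exits before line 6 is revisited and the source-$a$ iteration does not return. Your conclusion still holds because, at the latest, the check at line 14 returns $N=\{a,j\}$, which you have shown is feasible.
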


\begin{proposition}
    If a feasible influencer set $F\subseteq N$ exists and the \texttt{Waterfall} goes over $F\mysetminus \{j\}$, for any $j\in F$, then it will find a feasible influencer set of size $\vert F\vert$. 
    \label{prop: only I1 to make a feasible set}
\end{proposition}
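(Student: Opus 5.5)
The plan is to look at the single iteration of the while loop in Algorithm~\ref{alg: Waterfall algorithm} at which the current set equals $F' := F\mysetminus\{j\}$. I will show that whatever agent the greedy rule adds next yields a feasible set of size $|F|$, and that the next consistency check then returns it. Write $D_k(S) := |M(\ell_k)\cap S| - |M(\ell_k)^c\cap S|$. By the Remark after Definition~\ref{def: matching transformation}, $S$ is feasible if and only if $D_k(S)\geq 1$ on every row with $f^+(\ell_k,G_i)$ $\mathit{True}$ and $D_k(S)\geq 0$ on every row with $f^+(\ell_k,G_i)$ $\mathit{False}$. Adding one agent changes each $D_k$ by exactly $\pm 1$: it is $+1$ if the agent lies in $M(\ell_k)$ and $-1$ otherwise.

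First I dispose of the degenerate case. If no row of $\MM_{F'}$ is in state \textbf{I} or \textbf{IT}, the algorithm returns $F'$ at once. That set is feasible, though of size $|F|-1$, so it is at least as good; I will note this in a sentence, reading the statement as ``finds a feasible set of size at most $|F|$'' in that case. Otherwise the algorithm computes $R$, the set of rows in states \textbf{I}, \textbf{IT}, \textbf{CT}, \textbf{BC}. These are exactly the rows where an agent outside $M(\ell_k)$ would push the row into an inconsistent state.

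The key claim is that $j\in M(\ell_r)$ for every $r\in R$. Since $F=F'\cup\{j\}$ is feasible, every row of $\MM_F$ is consistent or a consistent tie. Suppose $j\notin M(\ell_r)$ for some $r\in R$. Then $D_r(F)=D_r(F')-1$, and for each state in $R$ this value is inconsistent. For \textbf{BC} it drops from $1$ to $0$ on a $\mathit{True}$ row. For \textbf{CT} it drops from $0$ to $-1$. For \textbf{IT} and \textbf{I} it is already too small and only decreases. This contradicts the feasibility of $F$. Consequently $j$ attains the maximal possible score $|R|$ in the argmax, so any agent $j'$ the algorithm picks also has score $|R|$, which forces $j'\in M(\ell_r)$ for all $r\in R$. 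Note that $j'\notin F'$ and $j'$ may differ from $j$.

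Now I verify that $F'' := F'\cup\{j'\}$ is feasible. On rows $r\in R$ we have $D_r(F'')=D_r(F')+1=D_r(F)$, so these rows are in the same consistent state as in $\MM_F$. On rows $k\notin R$, the row of $\MM_{F'}$ is in state \textbf{C} but not \textbf{BC} or \textbf{CT}. For $\mathit{True}$ rows this means $D_k(F')\geq 2$, and for $\mathit{False}$ rows it means $D_k(F')\geq 1$. Adding $j'$ lowers $D_k$ by at most one, so these rows remain consistent or become consistent ties. Hence the next pass of the loop finds no \textbf{I} or \textbf{IT} row and returns $F''$, which has size $|F|$.

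The main obstacle I expect is the bookkeeping in the key claim. I must check carefully, state by state, that each of \textbf{I}, \textbf{IT}, \textbf{CT}, \textbf{BC} genuinely cannot absorb an agent outside its matching set without becoming inconsistent. I must also check that the complement of $R$ has slack of at least one, which needs care in distinguishing the $\mathit{True}$ and $\mathit{False}$ rows.
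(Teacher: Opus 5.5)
Your proof is correct and follows the same route as the paper's: the omitted agent $j$ must lie in $M(\ell_r)$ for every $r\in R$ and so attains the maximal score $|R|$, which forces every argmax choice to rescue all of $R$; adding any such agent therefore yields a feasible set. Your state-by-state check, the slack bound on rows outside $R$, and the degenerate case where $F\mysetminus\{j\}$ is already feasible make explicit what the paper's sketch only asserts. One corner is left open, by you and by the paper alike: when $F=N$, the loop guard $F\neq N$ skips the check on $F''=N$, but the final test for feasibility of $N$ after the for loop then catches it, so your ``size at most $|F|$'' reading still holds.
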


Propositions \ref{prop: feasible pair} and \ref{prop: only I1 to make a feasible set} ensure that the \texttt{Waterfall} finds a solution if it iterates over any subset of agents that is one agent away from being feasible. This is because any ties occur among agents who rescue all the remaining labellings. Therefore, the failure probability corresponds to the likelihood of omitting all subsets that are one agent short.

Our algorithm resembles the \texttt{Greedy} cover algorithm in \cite{lovasz_ratio_1975}, which was pivotal for proving that the ratio of optimal integral and fractional covers of a hypergraph $G$ is at most $1+\log \deg(G)$, where 
$\deg(G)$ is the hypergraph's maximum degree. \texttt{Greedy}'s approximation is shown to be optimal in worst-case scenarios in \cite{feige_threshold_1998}. Further, \cite{arpino_greedy_2023} characterised the integrality gap, up to multiplicative constants, for the average-case of randomised Hitting Set instances. These instances assume that each set includes the given element independently with probability $p$, and the characterisation relies on a combinatorial analysis in dense and sparse regimes.

Our \texttt{Waterfall} traverses the initialisation space of \texttt{Greedy}, mitigating the dependencies introduced by the first pick. Also, it operates on the problem of finding a feasible influencer set, which is a generalisation of the Hitting Set. For instance, there is no monotonicity to rely on, so adding more elements does not necessarily lead to a valid solution. Think of the case where taking all agents in $N$ as agent $i$'s influencers conflicts with the oracle’s prediction because the global majority differs from the majority in $G_i$. Still, the feasibility checks built into the \texttt{Waterfall} prevent it from returning such false positives. 

\subsection{Experiments}
We run the \texttt{Waterfall} on four types of random networks: Erdős–Rényi (ER), Watts–Strogatz (WS), Regular Graph (RG) and Barabási–Albert (BA); generated using NetworkX. We vary the network size $n$ and number of examples $m$ between $10$ and $50$. For each $(n,m)$ pair, we generate $50$ labelling samples of size $m$ and $40$ random networks of size $n$ per sparsity regime $p\in\{0.1, 0.25, 0.5, 0.75, 0.9\}$. Networks are divided evenly among the graph types. We build the oracle prediction and matching transformation for each agent in the network and the labelling sample.


For fair comparison, the sparsity regime $p$ governs the structure across network types. It specifies the edge probability in ER and the rewiring probability in WS. We use $2 + p\,(\frac{n}{2} - 1)$, rounded to the nearest even integer, to control the number of neighbours in WS, the degree in RG, and how many edges each new node attaches to in BA. Exact generation functions are detailed in the Appendix.
 
Figure~\ref{fig: Waterfall algorithm fails} presents the False Negative Rate (FNR) of the \texttt{Waterfall} on inputs where a solution always exists. Our results show performance is agnostic to network topology. Breakdowns by network type and sparsity regime are available in the supplementary material. Instead, performance depends on the ratio of labelling samples to network size. Error rates are confined within a cone-shaped region and increase towards its interior. Intuitively, underdetermined cases fall below the cone, where extra agents enable incorrect yet feasible influencer sets. Conversely, the cone's upper bound reflects how additional examples reduce the likelihood of ties. Overall, the \texttt{Waterfall} succeeded $98.13\%$ of the trials, with a mean FNR of $2.08\times10^{-3}$ and variance $5.40\times 10^{-6}$. 
\begin{figure}[ht]
    \centering
    \includegraphics[width=0.75\columnwidth]{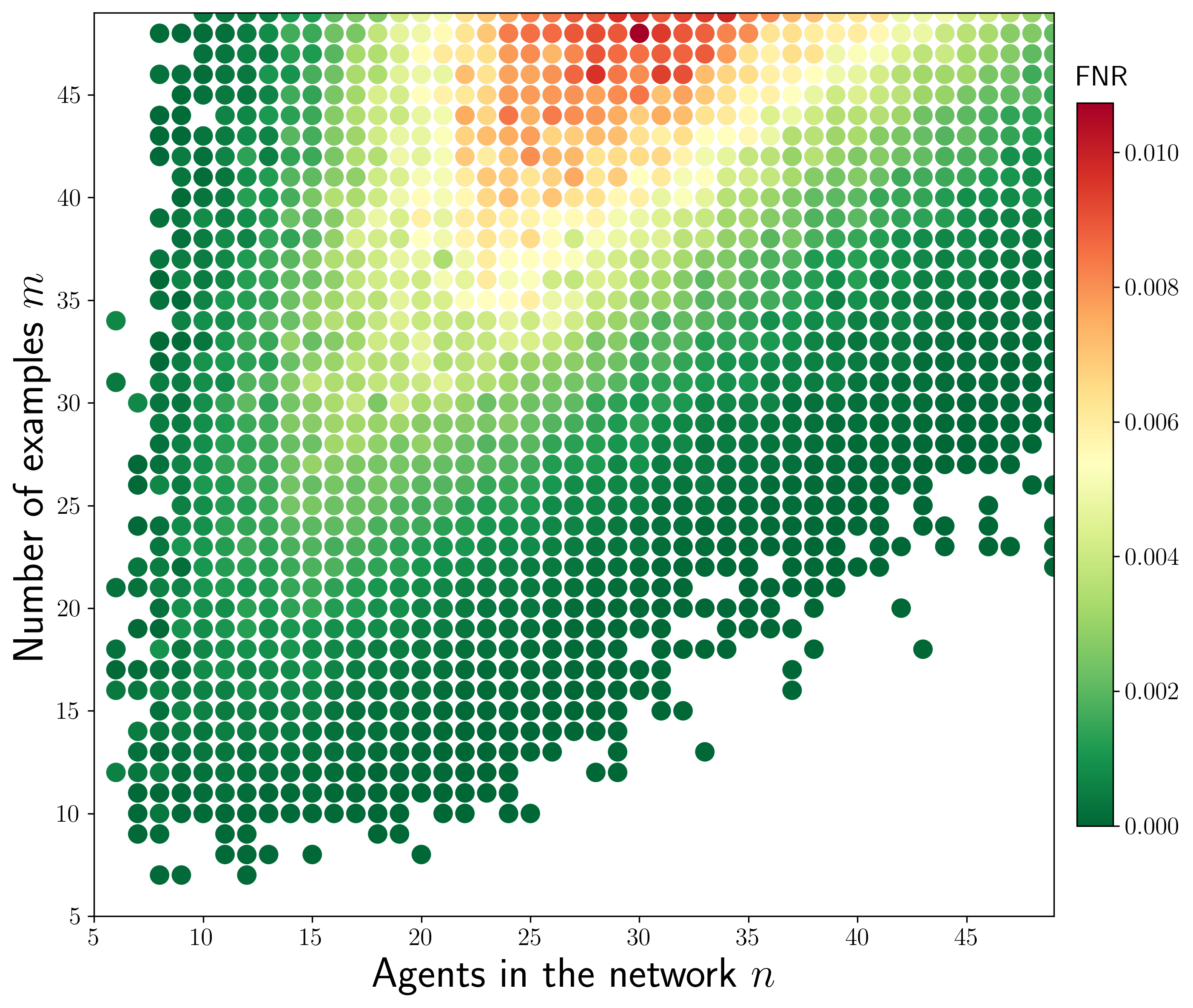}
    \caption{False Negative Rate (FNR) for the \texttt{Waterfall}. Each grid point $(m,n)$ tested 40 networks with~$n$ agents per density value $p\in\{0.1, 0.25, 0.5, 0.75, 0.9\}$ on 50 shared labelling samples of size $m$. The \texttt{Waterfall} runs in parallel across the agents in each network. For $n \leq 5$, we exhausted all possible networks and labelling samples without errors.}
    \label{fig: Waterfall algorithm fails}
\end{figure} 

\section{Discussion}
\label{sec: discussion}

We studied the problem of learning the structure of a social network governed by threshold-based opinion dynamics. We presented Consistent Hypothesis Finders (CHFs) that run in polynomial time for the cases of unanimity and all-but-$\kappa$ update rules. For majority dynamics, we proved that finding feasible influencer sets is $\NP$-complete. We designed a greedy polynomial-time heuristic to tackle this problem, which achieved over $98\%$ success rate on our tests in random networks, with false negatives contained within linear bounds of the parameters $n$ and $m$. These results align with the expected growth rate of the number of examples required for PAC learning shown in Equation $\eqref{eq: how many examples}$. 

Future work includes deriving closed-form formulas for the probability of false negatives in the \texttt{Waterfall}, as well as testing its performance on real-world networks. Questions about how to incorporate noise into the model remain open. For example, consider probabilistic opinion changes or inaccurate opinion reports, which are common in clinical trials. Similarly, it is unclear how the heuristic performs when the diffusion protocol is also unknown, and even more so, when there is agent heterogeneity.


\backmatter

\section*{Research Funding Statements}

Dmitry Chistikov is supported in part by the Engineering and Physical Sciences Research Council [EP/X03027X/1]. Luisa Estrada acknowledges the support of the Engineering and Physical Sciences Research Council through the Mathematics of Systems II Centre for Doctoral Training at the University of Warwick [EP/S022244/1]. Paolo Turrini acknowledges the support of the Leverhulme Trust for the Research Grant RPG-2023-050 and the TAILOR Connectivity Fund (Agreement 29).

\begin{appendices}

\section{Full Proofs}
\label{sec: full proofs}
\begin{proof}[Proof of Proposition \ref{prop: CHF unanimity}]
     Let $L_a:=\{\ell_{a}\}_{a=1}^{m_{a}}$ and $L_b:=\{\ell_{b}\}_{b=1}^{m_b}$ be, respectively, the always-changing and never-changing subsets of the sample $\{\ell_k\}_{k=1}^m$, for some $m_a+m_b = m$. Since the network has unanimity dynamics, all influencers of agent $i$ must belong to the matching sets of the always-changing labellings. Therefore, $G_i\subseteq \bigcap_{\ell\in L_a} M(\ell)$, which also implies that $\bigcap_{\ell\in L_a} M(\ell)\neq \emptyset$ if a solution exists. 

     On the other hand, for never-changing labellings, there is no restriction on how many influencers may be in $M(\ell_b)^c$, $\ell_b\in L_b$, as long as it is not all of them (i.e., $G_i\nsubseteq \bigcap_{\ell\in L_a} M^c(\ell)$). Moreover, for any superset $F\supseteq G_i$, if $G_i$ has at least one agent in the matching set of $\ell_{b}$, then so does $F$. Thus, any superset $F$ such that $G_i \subseteq F\subseteq \bigcap_{\ell\in L_a} M(\ell)$ will be a feasible influencer set. In particular, an algorithm that returns $\bigcap_{\ell\in L_a} M(\ell)$ qualifies as a CHF with time complexity $\oO(m_a\, n^2)$, further capped by $\oO(m\, n^2)$ for any oracle's prediction.
\end{proof}

\begin{proof}[Proof of Proposition \ref{prop: always changing all but k}]
    We show that a feasible set of influencers of size at most $\kappa+1$ must exist. Consequently, an algorithm that generates all the subsets of agents of size $\kappa+1, \kappa, \kappa-1, \dots$, checks whether they are contained within each $M(\ell_k)^c$, and returns the first one that is not contained in any of them, constitutes a CHF. Furthermore, since $\kappa$ is fixed, this CHF has a time complexity of $\oO(m \, n^{\kappa+2})$ 

    The first case is when the influencer set of agent $i$, $G_i\subseteq N$, is such that $\vert G_i \vert \leq \kappa + 1$. But then, the exhaustive search over all subsets of size $\kappa+1$ or less will eventually reach $G_i$ and retrieve it as a feasible set. 
    
    On the other hand, when $\vert G_i \vert > \kappa + 1$, it still cannot have more than $\kappa$ agents outside each matching set. In an always-changing sample, the agents in the matching set disagree with agent $i$ (i.e., $j\in M(\ell_k)$ if and only if $\ell_k(j)=\neg \phi$). Therefore, for every $\ell \in (\ell_k)_{k=1}^m$, any subset $F\subseteq G_i$ satisfies  
    $$\kappa \geq \vert M(\ell)^c \cap G_i\vert \geq \vert M(\ell)^c \cap F\vert,$$ ensuring that $\sum_{j\in F} \Ind{\ell(j)=\phi} \leq \kappa$. However, a feasible influencer set must have at least one agent who disagrees with agent $i$ for it to change opinion. This is guaranteed for any subset $F\subseteq G_i$, $\vert F \vert=\kappa +1$, as 
    $$\vert M(\ell) \cap F\vert \geq \vert F \vert - \vert M(\ell)^c \cap F\vert \geq 1.$$ 
    Thus, our CHF returns the first subset of $G_i$ with $\kappa+1$ agents it finds during its search.
\end{proof}

\begin{proof}[Proof of Theorem \ref{theo: majority NP vs RP} ]
    Suppose that there exists a polynomial-time algorithm $\lL$ to learn the influencers of agents in social networks with majority dynamics. Given a budget $d>0$ and a set of sets $\{S_1, \dots , S_m\}$, take the networks with $n+1$ agents, where $n=\vert \bigcup_{k=1}^m S_k \vert + d +1$. Pick a target agent $i$ and construct a set of $m+d+2$ example network labellings as in the proof of Theorem \ref{theo: Hitting set reduction}. Set the error parameter to
    
    $$\varepsilon = \frac{1}{m+d+3}.$$
    
    Run $\lL$ to obtain with probability $\nicefrac{1}{2}$ a set of influencers $F\subseteq N$ for agent $i$, if a feasible set of influencers exists. If no such set exists, $\lL$ halts without returning anything. The algorithm runs in polynomial time for $n$ and $\nicefrac{1}{\varepsilon} = m+d+3$, so it will be polynomial in the size of $\bigcup_{k=1}^m S_k$, the number of sets in $\{S_1, \dots , S_m\}$ and the budget $d$. 

    Now, consider a uniform distribution $\dD$ over the examples and recall that agent $i$ always changes opinion. Therefore, the error $\varepsilon$ of $F$ with respect to $\dD$ is
    \begin{align*}
        \varepsilon &= \sum_{\ell \in \{1,0\}^N} \Ind{f^+(\ell,F)} \dD(\ell)\\
        &= \frac{1}{m+d+2} \sum_{\ell \in (\ell_k)_{k=1}^{m+d+2}} \Ind{f^+(\ell,F)}.
    \end{align*}
    However, because $\varepsilon < (m+d+2)^{-1}$, we have that $f^+(\ell,F)$ is $\mathit{True}$ for all the labellings in our examples. Thus, $F$ is a feasible set of influencers, and from Theorem \ref{theo: Hitting set reduction}, we can reconstruct a hitting set for $\{S_1, \dots, S_m\}$. Moreover, since the Hitting Set problem is \textbf{NP}-complete, $\lL$ would be capable of solving any \textbf{NP}-hard problem while remaining \textbf{RP} itself. \\
\end{proof}

\section{In-Depth Analysis of the \texttt{Waterfall} Algorithm}
\label{sec: deep Waterfall}
When there are no agents that belong to the matching sets of all the labellings in our sample (i.e., $\cap_{k=1}^m M(\ell_k)=\emptyset$), we need to strike a balance when picking agents from inside and outside the matching set of each labelling. There must never be strictly more agents that do not match the oracle prediction, and equality can only hold for non-changing labellings. We introduce the concepts of streams and waterfalls to keep track of this and provide a graphical intuition of the \texttt{Waterfall}.

\subsection{Streams and Waterfalls}
\label{sec: streams and waterfalls}
\begin{definition}
    Let $\MM$ be the $m\times n$ matching transformation of a labelling sample $(\ell_k)_{k=1}^m$. Then, the sequence $\ss :=(s_k)_{k=1}^m$ is a \emph{stream} over $\mathbf{M}$ if $s_k \in M(\ell_k) \subseteq N$ for $k=1, \dots, m$. Moreover, we say that a collection $W=\{\ss^1, \dots , \ss^w\}$, $w\geq 1,$ of streams over $\mathbf{M}$ is a \emph{waterfall} of size $w$ if $s^u_k \neq s^v_k$ for $k=1, \dots, m$ and any $u,v \in \{1, \dots, w\}, \; u\neq v$. 
\end{definition}

\begin{remark}
    We use $W_k:=\{s_k^1, \dots, s_k^w\}$ as a shorthand for the agents that belong to a stream on the $k$-th row of $\MM$. 
\end{remark}

\begin{definition}
    \label{def: waterfall ambit}
    The \emph{ambit} of a waterfall $W$ of size $w\geq 1$ over an $m\times n$ matching transformation $\MM$ is the minimal set of agents that can form all the streams in $W$. It is denoted by
    $$\AW:= \cup_{v=1}^w \cup_{k=1}^m s_k^v.$$
\end{definition}

An easy way to avoid confusion between the size and ambit of a waterfall is to remember that \underline{s}ize refers to the \underline{s}treams, while \underline{a}mbit refers to the \underline{a}gents. 

We allude to streams and waterfalls as they can be visualised as paths flowing down the \majority{white} nodes of a matching transformation $\MM$ without overlapping. We present how to use these devices to verify if a subset $F\subseteq N$ is a feasible influencer set.

\begin{proposition}
    \label{prop: iif condition}
    Let $\MM$ be an $m\times n$ matching transformation. Then, a subset $F\subseteq N$ is a feasible set of influencers over $\MM$ if and only if a waterfall $W$ of size $w\geq~\ceil{\vert F\vert/2}$ can be built over $\MM_F$. Further, if $\vert F \vert$ is even and $w=\vert F \vert/2$, we allow $\vert M(\ell_k) \cap F\vert =\vert F \vert/2$ only when $f^+(\ell_k,F)$ is $\mathit{False}$.
\end{proposition}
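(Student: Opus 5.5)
The plan is to reduce both directions to a row-by-row counting statement. By the Remark after Example~\ref{ex: majority sets}, $F$ is feasible under majority dynamics exactly when, for every $k$, the row sum of $\MM_F$ is nonnegative, and equals zero only when $f^+(\ell_k,G_i)$ is $\mathit{False}$. Writing $c_k:=\vert M(\ell_k)\cap F\vert$, the row sum is $2c_k-\vert F\vert$. So feasibility is equivalent to
\begin{align*}
c_k \geq \vert F\vert/2 \text{ for all } k, \text{ with equality only if } f^+(\ell_k,G_i) \text{ is } \mathit{False},
\end{align*}
where the equality case can only arise when $\vert F\vert$ is even. Since $c_k$ is an integer, the inequality part is the same as $c_k\geq \ceil{\vert F\vert/2}$. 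It therefore suffices to show that a waterfall of size $w$ exists over $\MM_F$ if and only if $\min_k c_k\geq w$.

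For ($\Leftarrow$), I will assume $F$ is feasible and set $w:=\ceil{\vert F\vert/2}$. In each row $k$ I list the first $w$ agents of $M(\ell_k)\cap F$ under a fixed ordering of $F$, which is possible because $c_k\geq w$. I then define $s^v_k$ to be the $v$-th listed agent. Each $\ss^v=(s^v_k)_k$ is a stream over $\MM_F$, since $s^v_k\in M(\ell_k)\cap F$. Distinct streams differ in every row because they use distinct list positions, so $\{\ss^1,\dots,\ss^w\}$ is a waterfall of size $w$.

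For ($\Rightarrow$), suppose a waterfall $W$ of size $w\geq\ceil{\vert F\vert/2}$ exists over $\MM_F$. Then $W_k$ consists of $w$ distinct agents, all lying in $M(\ell_k)\cap F$, so $c_k\geq w\geq\ceil{\vert F\vert/2}$. By the equivalence above, every row sum is nonnegative. A zero row sum can only occur when $\vert F\vert$ is even and $c_k=\vert F\vert/2$, which forces $w=\vert F\vert/2$. The proviso in the statement restricts exactly this case to labellings without an opinion change. Hence the zero-sum condition of the Remark holds, and $F$ is feasible.

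The main obstacle is not combinatorial, since no Hall-type matching argument is needed: every row constraint is independent, and distinctness is only required within a row. The delicate point is the tie clause. The statement writes $f^+(\ell_k,F)$ where the oracle label $f^+(\ell_k,G_i)$ is presumably meant, and I will read it that way. I will also need to check that the even case with $w=\vert F\vert/2$ is the only case in which a tie can occur, and that a waterfall of larger size rules ties out automatically.
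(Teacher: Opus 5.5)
Your proof is correct and follows essentially the same route as the paper. Both reduce feasibility to $\vert M(\ell_k)\cap F\vert\geq\ceil{\vert F\vert/2}$ in every row, with ties only on non-changing labellings, and build the streams by taking the $v$-th agent of $M(\ell_k)\cap F$ in each row. The only differences are that you fix $w=\ceil{\vert F\vert/2}$ where the paper takes $w=\min_k\vert M(\ell_k)\cap F\vert$, and that your labels $(\Leftarrow)$ and $(\Rightarrow)$ are swapped relative to the statement.

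Your reading of the tie clause via the oracle label $f^+(\ell_k,G_i)$ is the right one. Taken literally, $f^+(\ell_k,F)$ is automatically $\mathit{False}$ at a tie, so the proviso would be vacuous.
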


\begin{proof}
    For $(\Leftarrow)$, we have that $\vert M(\ell_k) \cap F \vert \geq \ceil{\vert F \vert/2}$ for $k=1,\dots, m$, since waterfall $W$ has at least that many streams flowing through $F$. Consequently, $\sum_{j\in F} \MM_{kj}\geq 0$. If $\sum_{j\in F} \MM_{kj}> 0$, this means that there is a strict majority of agents in $F$ whose opinion in $\ell_k$ matches the output opinion of agent $i$ predicted by $f^+(\ell_k, G_i)$. When $\sum_{j\in F} \MM_{kj}=0$, there is a tie, which can only occur on non-changing labellings. So, in either case, $F$ is a feasible set for $\ell_k$.
    
    For $(\Rightarrow)$, we assume that $F$ is a feasible influencer set, so agent$~i$ agrees with at least half of the agents in $F$ matched the prediction after one opinion diffusion step of $f^+(\ell^+_k, G_i)$, for $k=1,\dots, m$. and Therefore, $\vert M(\ell_k) \cap F \vert \geq \vert F \vert / 2$. Ties occur if $\vert M(\ell_k) \cap F \vert = \vert F \vert/2$, but since $F$ is a feasible influencer set, agent $i$ does not change opinion in this case, and $M(\ell_k)$ are the agents who agree with agent $i$ in $\ell_k$. Thus, for each $\ell_k$, we can: 
    \begin{enumerate}
        \item Enumerate the agents in $M(\ell_k)\cap F:=\{k_1, \dots, k_{w_k}\}$, for some $w_k\geq \vert F \vert/2$.
        \item Pick $w:=\min_{k\in \{1, \dots, m\}} w_k \geq \vert F \vert/2$.
        \item Build a stream $\ss^v=(k_v)_{k=1}^{m}$ over $\MM_F$ for every\\ $v=1,\dots, w$.
    \end{enumerate} 
    Since none of these streams overlap, together they form a waterfall $W=\{\ss^1, \dots, \ss^w\}$ of size $w\geq \vert F \vert/2$.
\end{proof}

\begin{proof}[Proof of Proposition \ref{prop: complexity if found}]
    Let $F:=\AW$ be the ambit of the waterfall and assume the \texttt{Waterfall} found a waterfall $W$ with the right ratio between $\vert\alpha(W)\vert$ and $w:=\vert W \vert$ according to Proposition \ref{prop: iif condition}. Given a source node $s\in N$, the algorithm has to update the state vector $\rr$ at most $n$ times to build $W$. At each update, it computes the row sum of $\MM_{\AW}$ and classifies it as C, CT, BC, IT or I. Next, it performs a column-sum over $\MM_{\alpha(W)^c}$ to find how many labellings each agent rescues whilst keeping track of the set of top-rescuing agents. An agent from this set is selected in unit time.\footnote{\; The selection can be refined by adding filters. This increases the complexity by at most a constant factor (number of filters).} These steps are sequential and take $\oO(m\, n)$ time. Since up to $n$ agents can be added to $W$, building a waterfall from a source takes $\oO(m\, n^2)$ time. Iterating over all the possible source nodes $s\in N$ yields $\oO(m\, n^3)$ as claimed.
\end{proof}

\begin{proof}[Proof of Proposition \ref{prop: feasible pair}]
    Let $\MM$ be the matching transformation built from the labelling sample $(\ell_{k})_{k=1}^m$. When $\vert F\vert=0$, the \texttt{Waterfall} finds it because the first check it does is whether $\{\}$ is a feasible set of influencers. When $\vert F\vert=1$, we have that $\cap_{k=1}^m M(\ell_k) \neq \emptyset$. This means that for any agent $j\in \cap_{k=1}^m M(\ell_k)$, the algorithm will return $\alpha(W)=\{j\}$ as a feasible set of influencers once $j$ is the source node. Finally, let $\vert F\vert=\{j_1, j_2\}$, for $j_1, j_2\in N$, $j_1\neq j_2$. Without loss of generality, assume the \texttt{Waterfall} goes over agent $j_1$ as a source node and that the state vector of $\MM_{\{j_1\}}$ has $m_{1}<m$ labellings that need rescuing. If $F=\{j_1, j_2\}$ is a feasible set, then $j_2$ is a perfect match for $\{j_1\}$ and must rescue $m_1$ labellings. Any other agent tied with $j_2$ must also rescue $m_1$ labellings, so they will also be a perfect match for $\{j_1\}$. Regardless of which of these agents the \texttt{Waterfall} picks, the next validation point will find a feasible set.
\end{proof}

\begin{proof}[Proof of Proposition \ref{prop:  only I1 to make a feasible set}]
    Notice that one additional agent can only turn a labelling $\ell\in L^N$ from inconsistent to consistent if its margin is $M(\ell)\cap \AW\vert - \vert M(\ell)^c\cap \AW\vert\geq -1$. Additionally, we require equality if $f^+(\ell, G_i)$ is $\mathit{False}$. Let $\bar{\rr}$ be the state vector associated to $\MM_{\AW}$, whose labellings that need rescuing are stored in 
    $$R:=\{k\leq m : \bar{\rr}_k \in \{ I_1, IT, CT, BC\}\}, \; \vert R\vert = m_\alpha <m.$$ 
    If $\AW$ has a perfect match, then all the agents in 
    $$\argmax_{j\in \alpha(W)^c}\sum_{r\in R} \Ind{j\in M(\ell_r)},$$
    rescue $m_\alpha$ labellings. Regardless of which agent the \texttt{Waterfall} picks, the next validation point will find they form a feasible set as the new state vector will either have positive entries or maybe none where $f^+(\ell, G_i)$ is $\mathit{False}$.
\end{proof}

\subsection{Why Do We Call It the \texttt{Waterfall}?}

\begin{example}
    Suppose the \texttt{Waterfall} receives as input the $m\times n$ matching transformation $\MM$ and the oracle prediction for $(\ell_k)_{k=1}^m$ shown in Figure \ref{fig: Waterfall example 1}. Thus, we have a social network with $n=6$ agents plus the target agent $i$ and a sample of $m=8$ examples. We initialise our waterfall on the source node $i_1 \in N$. We bring an agent to the left of the state vector $\bar{\rr}$ to show we added it to $\alpha(W)$. The state of each $\bar{\rr}_k$ depends on the oracle prediction for $\ell_k$ and the sign of $\vert M(\ell_k)\cap \alpha(W)\vert - \vert M(\ell_k)^c\cap \AW\vert$.
    
    For each row in the restricted matrix $\MM_{\AW}$, we number the entries that belong to the majority set from left to right. Then, we let the streams in $\alpha(W)$ flow through the nodes with the same value. The size $w:=\vert W \vert$ of our waterfall is the minimax of these values.\footnote{\; \emph{Minimax:} Minimum of the maximums.}
    
    In general, our assumption of $\cap_{k=1}^m M(\ell_k)=\emptyset$ implies we cannot form a stream with a single node. This is seen in Figure \ref{fig: Waterfall example 1} with $\vert\alpha(W)\vert=1$ and $w = 0$. In it, $\alpha(W)$ is unfeasible for the labellings $\ell_1, \ell_2$ and $\ell_3$, although only $\ell_6$ and $\ell_7$ do not need rescuing. Here, agent $i_5$ rescues the most labellings. We add it to $\alpha(W)$ to obtain Figure \ref{fig: Waterfall example 2}. 
    
    We have one stream over two agents if $\alpha(W)=\{i_1, i_5\}$, so it has the potential to be a feasible set. Nevertheless, $\ell_3$ and $\ell_5$ have inconsistent ties. Thus, we need to build another stream. In this step, agent $i_2$ rescues the most labellings, so we add it to $\alpha(W)$, which yields Figure \ref{fig: Waterfall example 3}. Finally, because $\alpha(W)=\{i_1, i_5, i_2\}$ satisfies that $\ceil{\vert \alpha(W)\vert/2}=w=2$, Proposition \ref{prop: iif condition} tells us that $\alpha(W)$ is a feasible set of influencers. This is further verified by the lack of inconsistent states in $\bar{\rr}$. Consequently, the \emph{\textbf{else}} condition in the \texttt{Waterfall} is met, and it returns $\alpha(W)$.

\begin{figure}[H]
    \centering
    \begin{subfigure}[t]{0.9\linewidth}
    \centering
    \includegraphics[width=0.6\linewidth]{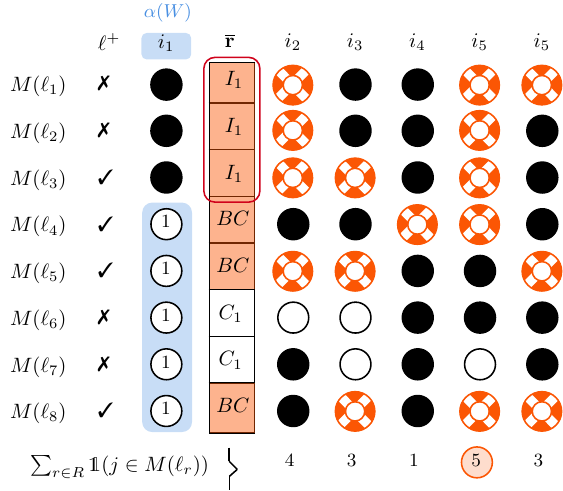}
    \caption{Agent $i_1$ as source node in the \texttt{Waterfall}, $\vert \alpha(W)\vert = 1, \; w = 0.$}
    \label{fig: Waterfall example 1}
    \end{subfigure}

    \begin{subfigure}[t]{0.9\linewidth}
    \centering
    \includegraphics[width=0.6\linewidth]{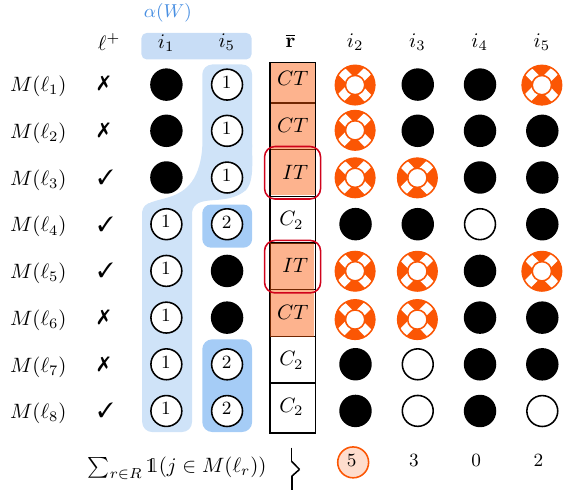}
    \caption{The first stream emerges in the \texttt{Waterfall}, $\vert \alpha(W)\vert = 2, \; w = 1.$}
    \label{fig: Waterfall example 2}
    \end{subfigure}
\end{figure}
\begin{figure}[H]\ContinuedFloat
    \centering
    \begin{subfigure}[t]{0.9\linewidth}
    \centering
    \includegraphics[width=0.6\linewidth]{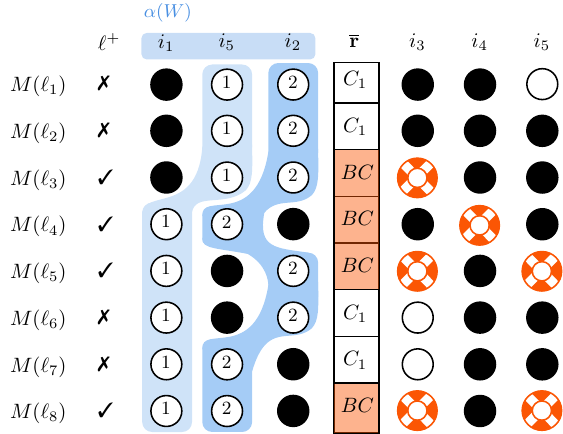}
    \caption{A feasible influencer set $\alpha(W)$ is found as per Proposition \ref{prop: iif condition}, $\vert \alpha(W)\vert = 2, \; w = 1.$}
    \label{fig: Waterfall example 3}
    \end{subfigure}
    
    \caption{Example of the \texttt{Waterfall}. Agents depicted with \protect\majority{white} belong to the majority set $M(\ell)\subseteq N$ and with \protect\majority{black} if they belong to $M(\ell)^c$ for $\ell$ in the labelling sample $(\ell_{k})_{k=1}^m$. The oracle prediction is such that \cmark indicates $f^+(\ell, G_i)$ is $\mathit{True}$ and \xmark indicates it is $\mathit{False}$. Agents in $\alpha(W)$ are on the left of the state vector $\rr$ with their streams highlighted. In $\rr$, we mark inconsistent labellings and labellings that need to be rescued with \protect\unfeasible and \protect\rescue, respectively. Finally, we use \protect\rescuer to represent the labellings that each agent in $N\mysetminus \alpha(W)$ rescues.}
    \end{figure}
\end{example} 

\subsection{Implementing the Tie-Breaking Subroutine}
\label{sec: tie-breaking subroutine}

More robust versions of the \texttt{Waterfall} can be achieved by implementing tie-breaking subroutines. For example, Algorithm \ref{alg: filter subroutine} presents a subroutine that recalculates the labellings that need to be rescued by the tied agents, filtering out the most consistent labels from the last tie. That is, it first omits the \textbf{BC} states and then, if ties persist, it progressively ignores the labellings that satisfy $\vert M(\ell) \cap F \vert - \vert M(\ell)^c \cap F \vert \geq \tau$, raging $\tau$ from $\tau = 0$ down to $\tau = -|F|$, or until there are no more ties. Notably, as seen in Figures \ref{fig: size ambit single filter} and \ref{fig: size ambit multi filter}, this subroutine ensures the feasible influencer set for $N=[4]$ is at most $\vert G_i \vert$, unlike the single-filter version. Yet, this comes with an increased complexity determined by the number of filters, which sets how many times the algorithm recalculates the labellings rescued by each agent in $\alpha(W)^c$. Since filters are capped by the sample size, the filtered version of the \texttt{Waterfall} terminates in $\mathcal{O}(m^2 \, n^3)$ time.

\begin{algorithm}
\caption{Filters subroutine}
\label{alg: filter subroutine}
\begin{algorithmic}[1]
\State \textbf{Input:} An oracle prediction $(\ell^+_k)_{k=1}^m$, a matching transformation $\MM$ and the current waterfall ambit $\AW$.
\State Initialise the set of available agents $A = \alpha(W)^c$.
\State Calculate the margin vector $\Delta$ s.t. $\Delta_{k}=\sum_{j\in \AW} \MM_{k,j}$ for $k=1, \dots, m$.
\State Initialise $filter$ as the values in $\Delta$ less or equal than $1$.
\While{$filter \neq \emptyset$}
\State Initialise the remaining agents' vector $\mathbf{r}= [0]\times \vert A\vert$.
\LComment{Counter for labellings rescued per agent.}
\For{$j \in A$}
\For{$k=1,\dots, m$}
\If{$\MM_{k,j} = 1$}
\Comment{Agent $j\in M(\ell_k)$}
\If{$\Delta_k \leq max(filter)$}
\If{$\Delta_k = 1$ and $\ell_{k}^+$}
\LComment{Extra check for BC state.}
\State Update $\rr_j = \rr_j + 1$.
\Else \; Update $\rr_j = \rr_j + 1$.
\EndIf
\EndIf
\EndIf
\EndFor
\EndFor
\vspace{0.1cm}
\State Update $A = \argmax_{j\in A} \rr_j$.
\If{$\vert A \vert = 1$} \textbf{return} $A$. \Comment{No ties.}
\Else \; Update $filter = filter \mysetminus max(filter)$.
\EndIf
\EndWhile
\State \textbf{return} any agent $j\in A$.
\end{algorithmic}
\end{algorithm}

\begin{figure}[H]
    \centering
   \includegraphics[width=0.75\columnwidth]{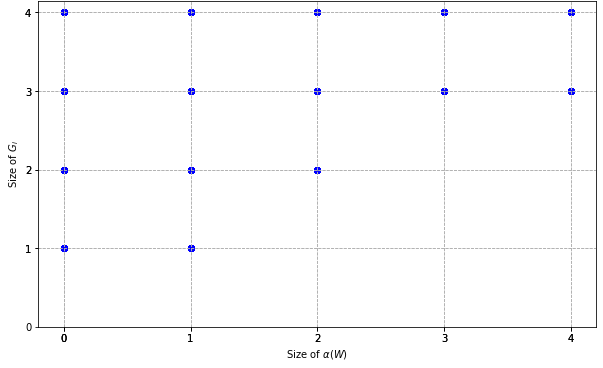}
   \caption{Number of agents in the feasible set $\alpha(W)$ compared to the true influencer set $G_i$ for networks with at most $5$ agents. The \texttt{Waterfall} in its single-filter version, with a uniformly at random tie-breaking criterion.}
   \label{fig: size ambit single filter}
\end{figure}

\begin{figure}[H]
   \centering
   \includegraphics[width=0.75\columnwidth]{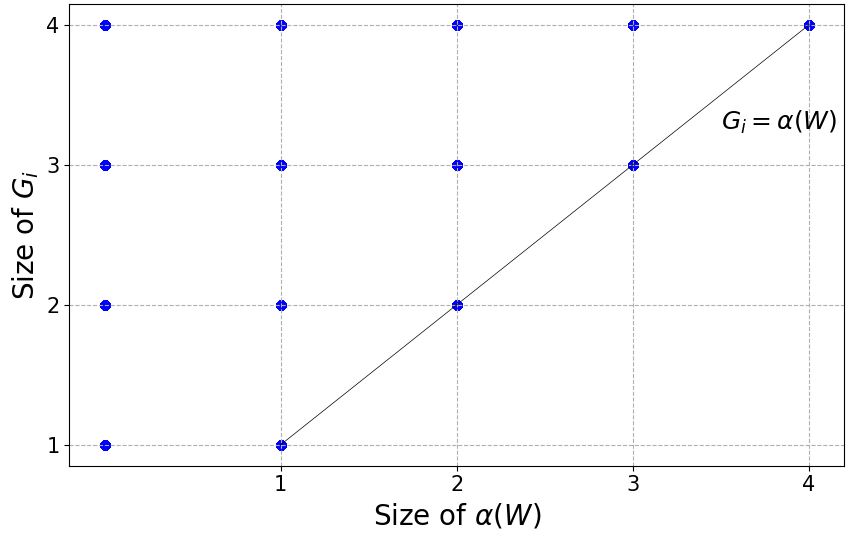}
   \caption{Number of agents in the feasible set $\alpha(W)$ compared to the true influencer set $G_i$ for networks with at most $5$ agents. The \texttt{Waterfall} is run using the tie-breaking subroutine from Algorithm \ref{alg: Waterfall algorithm} in Section \ref{sec: tie-breaking subroutine}.}
   \label{fig: size ambit multi filter}
\end{figure}

\section{Parameter Choice for Graph Generation Models}
\label{sec: param specs}
We test our algorithms on four random network structures: Erdős–Rényi (ER), Watts–Strogatz (WS), Regular Graph (RG), and Barabási–Albert (BA). All graphs are generated using the Python library NetworkX (version 3.5) with Python version 3.11. For fair comparison, given a sparsity regime $p\in \{0.1, 0.25, 0.5, 0.75, 0.9\}$, we define $\texttt{p$_1$} = p$ and $\texttt{p$_2$} = 2 + p\cdot(\frac{n}{2} - 1)$, rounding \texttt{p$_2$} to the nearest even integer. All graphs are generated using the functions and parameters detailed below and then converted to their directed versions (i.e., undirected edges become bidirectional). Omitted parameters use default values. 
\begin{itemize}
    \item[-] \textbf{ER:} \texttt{gnp\_random\_graph(n, p=p$_1$)},
    \item[-] \textbf{WS:} \texttt{watts\_strogatz\_graph(n, k=p$_2$, p=p$_1$)},
    \item[-] \textbf{RG:} \texttt{random\_regular\_graph(d=p$_2$, n)} and
    \item[-] \textbf{BA:} \texttt{barabasi\_albert\_graph(n, m=p$_2$)}.
\end{itemize}

For a network of size $n\geq 1$, labelling samples were created using NumPy (version 1.24) via \texttt{np.random.choice([-1, 1], size=n)}. Individual labellings were appended to a set to ensure no duplicates until the desired sample size $m$ was reached. Within the \texttt{Waterfall} algorithm, ties were resolved uniformly at random, again via \texttt{np.random.choice}. Parallel execution was handled using Python’s built-in \texttt{concurrent.futures}. 

To ensure reproducibility, a global \texttt{master\_seed} was set at the start of each experiment with parameters $(n,m)$. Each subprocess had its unique seed from offsetting the master seed with the process index (i.e., \texttt{seed = master\_seed + process\_id}), so that each parallel worker drew from an independent and deterministic stream. We applied this consistently to NetworkX’s graph generators, NumPy’s sampling, and the \texttt{Waterfall}'s tie-breaks. Experiments were run on Ubuntu 22.04 with 16GB RAM and a $12^{th}$ Gen Intel Core i7-1260P CPU (16 threads), using Visual Studio Code (version 1.102). No dedicated GPU was used. 

Experimental results, differentiated by network type and sparsity regime, are shown in Figure \ref{fig: Waterfall algorithm fails - network type} and Figure \ref{fig: Waterfall algorithm fails - sparsity regime}, respectively. 

\begin{figure}[ht]
    \centering
    \includegraphics[width=\columnwidth]{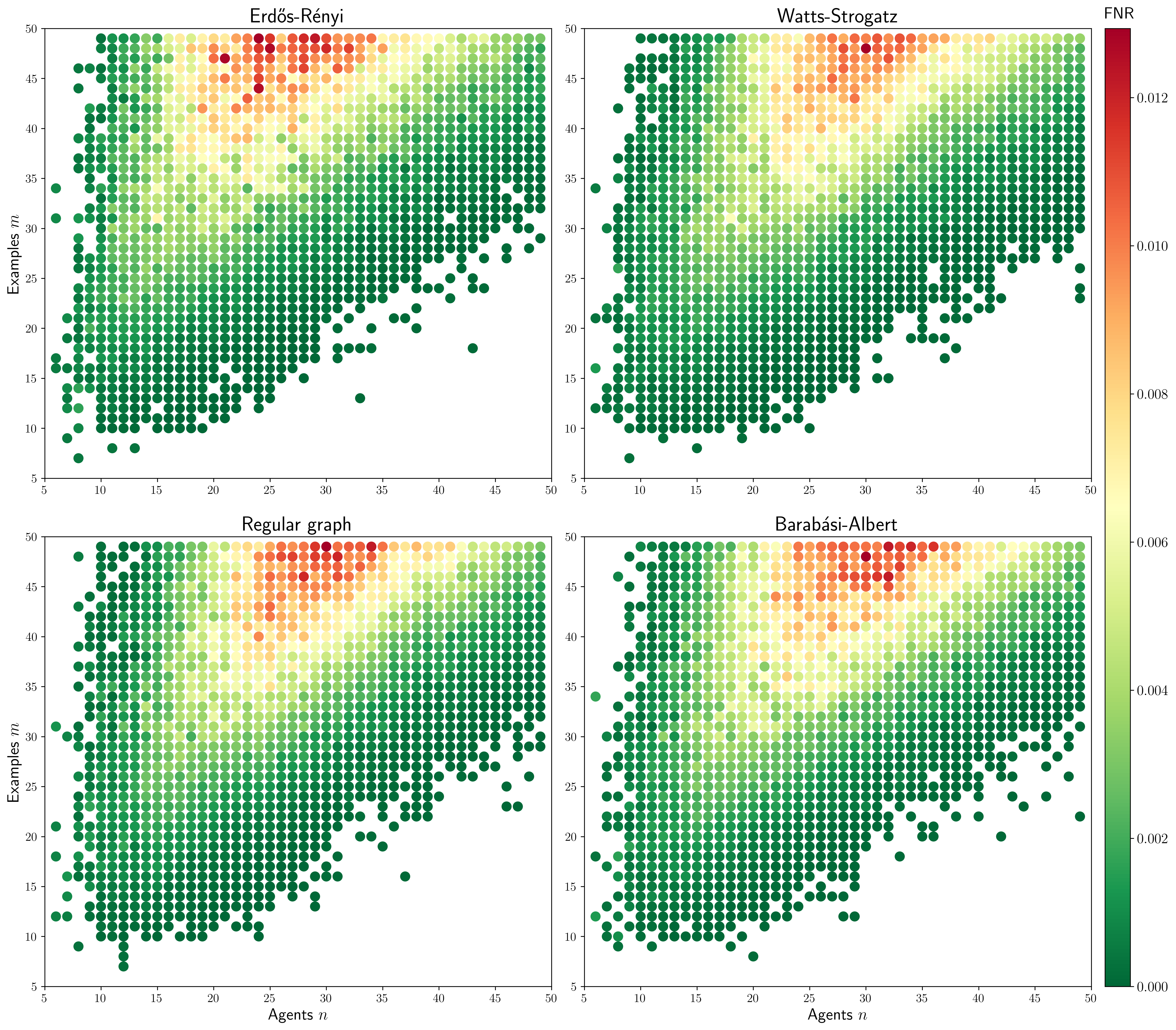}
    \caption{\texttt{Waterfall} FNR by network type. Each grid point corresponds to tests on 40 networks with~$n$ agents per sparsity value, on 50 shared labelling samples of size $m$.}
    \label{fig: Waterfall algorithm fails - network type}
\end{figure} 

\begin{figure}[H]
    \centering
    \includegraphics[width=\columnwidth]{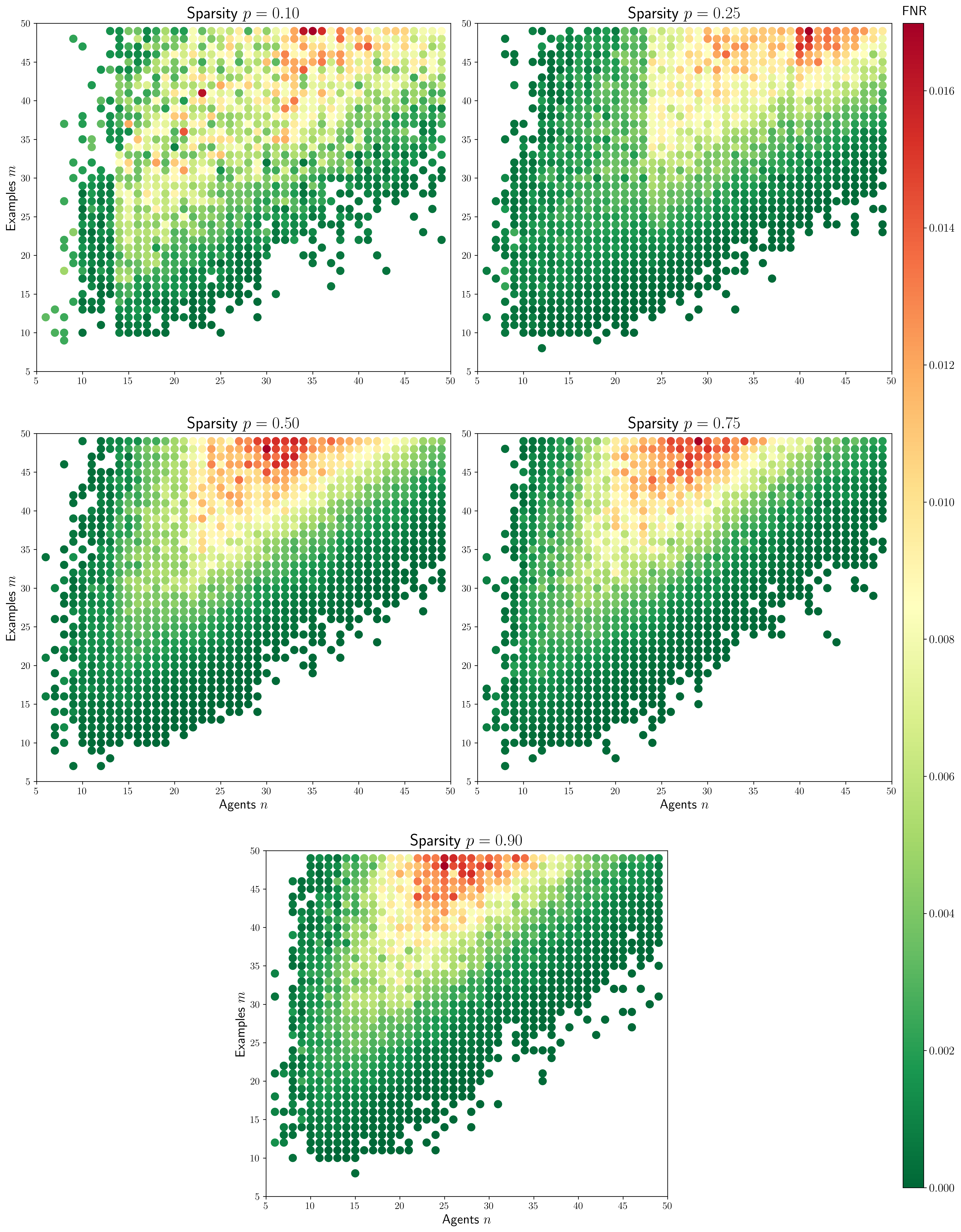}
    \caption{\texttt{Waterfall} FNR by sparsity regime. Each grid point corresponds to tests on 10 networks with~$n$ agents per graph type, on 50 shared labelling samples of size $m$.}
    \label{fig: Waterfall algorithm fails - sparsity regime}
\end{figure} 




\end{appendices}


\bibliography{references}

\end{document}